\newtheorem{assumption}{Assumption}
\newtheorem{proposition}{Proposition}
\newtheorem{theorem}{Theorem}
\newtheorem{lemma}{Lemma}
\newtheorem{remark}{Remark}
\title{Robust Quantum Control via a Model Predictive Control Strategy}
\author{
 Yunyan Lee \\
  School of Engineering\\
  Australian National University\\
  Canberra ACT 2601, Australia \\
  \texttt{Yun-Yan.Lee@anu.edu.au} \\
   \And
Ian R. Petersen \\
  School of Engineering\\
  Australian National University\\
  Canberra ACT 2601, Australia \\
  \texttt{ian.petersen@anu.edu.au} \\
  \And
 Daoyi Dong \\
  School of Engineering\\
  Australian National University\\
  Canberra ACT 2601, Australia \\
  \texttt{daoyi.dong@anu.edu.au} \\
}
\begin{document}
\maketitle
\begin{abstract}
This article presents a robust control strategy using Time-Optimal Model Predictive Control (TOMPC) for a two-level quantum system subject to bounded uncertainties. 
In this method, the control field is optimized over a finite horizon using a nominal quantum system as the reference and then the optimal control for the first time interval is applied and a projective measurement is implemented on the uncertain system. The new control field for the next time interval will be iteratively optimized based on the measurement result. We present theoretical results to guarantee the stability of the TOMPC algorithm. We also characterize the robustness and the convergence rate of the TOMPC strategy for the control of two-level systems.
Numerical simulations further demonstrate that, in the presence of uncertainties, our quantum TOMPC algorithm enhances robustness and steers the state to the desired state with high fidelity. This work contributes to the progress of Model Predictive Control in quantum control and explores its potential in practical applications of quantum technology.
\end{abstract}


\section{Introduction}
\label{sec:introduction}
Quantum control theory has attracted increasing attention in the context of modeling and controlling quantum systems~\cite{dong2010quantum,wiseman2010quantum,altafini2012modeling,dong2022quantum,dalessandro2021introduction}. Quantum control encompasses the process of determining a control field that drives a quantum system to perform a desired operation, such as quantum state preparation~\cite{Ticozzi2013Stabilization, dong2023learning} and synthesis of quantum gates~\cite{kosut2013robust, Dominy2014Characterization, Silva2008Flatness}. Quantum control has demonstrated efficacy in various aspects of quantum systems, including enhancing the fidelity of quantum gates~\cite{zahedinejad2015high, li2006control}, achieving fast state transfer, and implementing quantum error correction~\cite{ahn2002continuous, Pan2017Stabilizing}.

Numerous research directions have emerged within the field of quantum control, such as quantum optimal control~\cite{james2021optimal, albertini2014time, grivopoulos2008optimal}, quantum learning control~\cite{Dong2020, gentile2021learning, wu2016robust, niu2019universal, wu2019learning, chen2014sampling}, quantum feedback control~\cite{Zhang2012Quantum, wiseman_1993quantum, Daeichian2023Feedback, James2008H, Zhang2011Direct, sayrin2001real}, and quantum robust control~\cite{liu2022fault, schirmer2022robust, dong2010robust, protopopescu2003robust, kosut2013robust, Wang2023Quantum, dong2019learning}. 
Quantum optimal control facilitates the development of pulse sequences for guiding quantum systems toward desired states or operations with optimal performance (e.g., minimum time, maximum fidelity, minimum control energy). In quantum learning control, the control design problem of quantum systems is usually formulated as a machine learning task and various learning algorithms (e.g., genetic algorithm, reinforcement learning, differential evolution) have been employed to search for optimal or robust control pulses. Quantum feedback control utilizes information from the output of a quantum plant to design a feedback controller for achieving a desired target. Two main approaches including measurement-based feedback control and coherent feedback control have been developed.
In the development of practical quantum technology, robustness is a key requirement. Various control approaches including feedback control, learning control, $H^\infty$ control \cite{James2008H}, and sliding mode control \cite{dong2023learning} have been developed to enhance the robustness of quantum systems. In this paper,
we focus on enhancing robustness through the use of Model Predictive Control (MPC).

MPC has gained prominence as a control technique, owing to its capacity to control systems with constraints and uncertainties~\cite{grune2017nonlinear, Berberich2021Data, Kohler2021A, Cannon2011Stochastic, Lorenzen2017Constraint}. MPC has also been extended to the realm of quantum control~\cite{clouatre2022model, goldschmidt2022model, hashimoto2017stability, hashimoto2013probabilistic, humaloja2018linear}. 
The authors in \cite{humaloja2018linear} utilized the Cayley-Tustin method to discretize the Schrödinger equation, facilitating an MPC design that assures stabilization. In~\cite{clouatre2022model}, a new tomography algorithm was presented to implement data-driven MPC for quantum systems. The work in \cite{goldschmidt2022model} showed that preparing a desired quantum state with MPC can handle disturbances using quantum tomography. While quantum tomography has been extensively explored~\cite{Wang2018A, qi2013quantum}, obtaining comprehensive information on quantum states via this method is still challenging since the requirement for resources and measurements usually increases exponentially with the system size. Therefore, our study introduces an MPC strategy based on Positive Operator-Valued Measure (POVM)~\cite{nielsen2010quantum} where only a small number of measurements are implemented and the measurements are integrated into developing a quantum MPC algorithm.

This paper aims to present a robust control strategy using MPC for a quantum system subject to bounded uncertainties. 
Our proposed strategy comprises two parts: (i) An open loop optimization problem is solved to generate the control signals, which are subsequently applied to the quantum plant. (ii) a quantum measurement, a POVM in our case, determines the system's state. This measured state is fed back to the open loop optimization problem to determine the control signals for the subsequent cycle. We refer to this process as Quantum Model Predictive Control (qMPC), as visualized in Fig.  \ref{fig: Diagram of qMPC}. Quantum measurement is introduced in this process, and the qMPC strategy offers robustness. In addition to securing the fidelity of the final quantum state, our approach considers the fidelity of the quantum state at each sample time $T_s$ and reduces tracking errors. This approach can be extended to control a broader range of quantum systems, enabling quantum states to follow a predetermined trajectory instead of merely ensuring the fidelity of the final quantum state. 

The main contributions of this paper are summarized as follows: (i) We devise a qMPC strategy by integrating MPC and quantum measurements to achieve robust control of quantum systems. (ii) The stability of this qMPC approach is demonstrated in the two-level quantum system. (iii) We establish a success probability bound associated with the measurement of quantum states using POVM and characterize the convergence rate of the qMPC algorithm. (iv) Numerical results are presented to demonstrate the robustness of the qMPC strategy for quantum control problems.

\begin{figure}[htbp]
    \centering
\begin{tikzpicture}[>=Stealth, node distance=20cm] 

\node[draw, rectangle, align=center] (optimizer) at (0,-3) {optimizer:\\ $\min J_L (|\psi_k\rangle)$ \\for nominal system}; 
\node[draw, rectangle, align=center, above left=2.6cm and -1cm of optimizer] (plant) {quantum plant};
\node[draw, rectangle, align=center, minimum size=0.6cm, right=1.2cm of plant] (POVM) {\qquad};

\draw (POVM.center) ++(0.35,-0.05cm) arc (30:150:0.4cm);
\draw[->] (POVM.center) ++(0,-0.2cm) -- ++(45:0.6cm);

\draw[->] (optimizer.west) -- ++(-2,0) |- node[pos=0.25, left, yshift=30pt] {$u_k^\star$} (plant.west);
\draw[->] (plant.east) -- node[midway, above] {$|\psi_{k+1}\rangle$} (POVM.west);
\draw[double equal sign distance, -Implies] (POVM.east) -- node[midway, above] {$|\psi'_{k+1}\rangle$} ++(2.2,0) |- (optimizer.east);
\draw[dashed, <-] (plant.north) -- ++(0,1) node[midway, right, align=center] {uncertainties}; 

\draw[dashed] (-4,-1.6) -- (4.5,-1.6) node[midway, below , align=right] {\textbf{Offline optimization}} node[midway, above, align=right] {\textbf{Online implementation}};
\end{tikzpicture}
 \caption{
 Diagram of qMPC: In the open loop optimization framework, the cost function \(J_L(|\psi_k\rangle)\) is optimized to generate the input signal \(u_k^\star\). Applying this input signal to the quantum plant, which is affected by uncertainties, yields the quantum state \(|\psi_{k+1}\rangle\). Subsequently, a quantum measurement is conducted to acquire the post-measurement state \(|\psi'_{k+1}\rangle\). This state is subsequently fed back into the optimization process. This process is repeated until the post-measurement state \(|\psi'_{k+1}\rangle\) approaches to the desired target state.
 }
    \label{fig: Diagram of qMPC}
\end{figure}
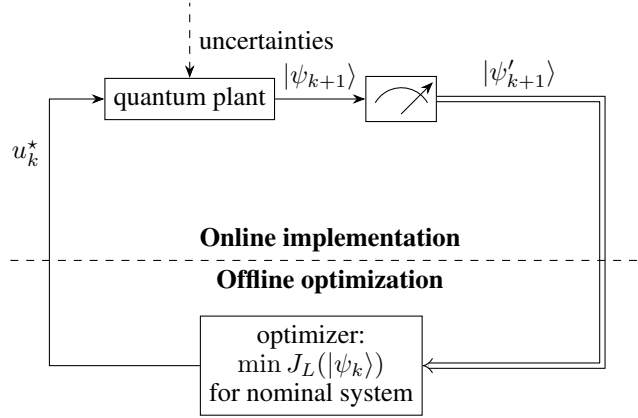
The structure of this paper is as follows. In Section~II, we formulate the quantum MPC problem. Section~III is devoted to presenting stability and robustness guarantees for our approach. In Section~IV, we offer numerical simulations to showcase the error tracking capabilities of our approach in qubit systems, thereby demonstrating the robustness of the proposed method. Finally, we provide concluding remarks in Section~V.

\section{Quantum MPC Problem Formulation}
\label{sec: qMPC problem formulation}
Our discussion is focused on the context of a two-level quantum control system. Firstly, a two-level quantum system is introduced, subsequently leading to a formal definition and exploration of the MPC problem for the nominal system. The last subsection presents a qMPC algorithm specifically designed to control quantum systems with uncertainties.

\subsection{Two-level Quantum Control System}
This subsection introduces a two-level pure-state quantum system (qubit). Its quantum state \(|\psi\rangle\) can be described by a two-dimensional unit vector in a Hilbert space \(\mathcal{H}\). In qubit systems, a useful description is through the Pauli matrices, denoted as:
\begin{equation}
\sigma_x=\left(\begin{array}{cc}
0 & 1 \\
1 & 0
\end{array}\right), \sigma_y=\left(\begin{array}{cc}
0 & -i \\
i & 0
\end{array}\right), \sigma_z=\left(\begin{array}{cc}
1 & 0 \\
0 & -1
\end{array}\right).  
\end{equation}
To accommodate every possible configuration, the free Hamiltonian is chosen as $H_{0} = \sum_{j=x,y,z} r_j \sigma_j$, the control Hamiltonian is given by $H_{u}= \sum_{j=x,y,z} u_j (t) \sigma_j$, and the uncertain Hamiltonian is given by $H_{\Delta} = \sum_{j=x,y,z} \Delta_j (t) \sigma_j$, where $r_j,u_j (t),\Delta_j (t) \in \mathbb{R}$.

By freely choosing a suitable coordinate axis, we can assume $H_{0} = r \sigma_z$. Its eigenstates are denoted by $|0\rangle$ and $|1\rangle$. To facilitate our analysis, we define the vector $\vec{v}(t)=(u_x(t),u_y(t),r+ u_z(t))$ corresponding to the coefficients associated with the free and control Hamiltonians. The vector $\vec{\Delta}(t)=(\Delta_x(t),\Delta_y(t),\Delta_z(t))$, and $\Vec{\sigma}=(\sigma_x,\sigma_y,\sigma_z)$ correspond to uncertainties and the Pauli matrices, respectively. Accordingly, the total Hamiltonian can be expressed as $H_{0} + H_u + H_{\Delta} = \Vec{v}(t)\cdot\vec{\sigma} + \Vec{\Delta}(t)\cdot\vec{\sigma}$, and the dynamics can be described by a Schrödinger equation. Consider the nominal system, which excludes uncertainties:
\begin{align}
\begin{aligned}
i|\dot{\psi}(t)\rangle &=(H_{0} + H_{u})|\psi(t)\rangle \\
&=\vec v(t) \cdot \vec \sigma |\psi(t)\rangle,  
\end{aligned}
\end{align}
where we adopt atomic units with \(\hbar = 1\).

In contrast, the real system, which incorporates uncertainties, is expressed as
\begin{align}
\begin{aligned}
i|\dot{\psi}(t)\rangle &=(H_{0} + H_{u} + H_{\Delta})|\psi(t)\rangle, \\
&=(\vec v(t) + \vec \Delta(t)) \cdot \vec\sigma |\psi(t)\rangle.
\end{aligned}
\end{align}
Now, we assume a fixed sampling time $T_s$. In each sampling period $t \in [kT_s, (k+1)T_s]$ for $k=0,1,2, \ldots$, the uncertainties $\vec{\Delta}_{k}$ are assumed to be time-invariant, unknown but bounded, and the control $\vec v(t)$ is chosen to be constant (i.e., we employ a piece-wise constant control $\vec v_k$).
Discretizing the continuous model with sample period $T_s$ yields the discrete time evolution for both the nominal and real systems as
\begin{align}
\label{eqn: discrete two-level system nominal}
&|\psi_{1|k}\rangle = e^{-i T_s \Vec{v}_{k}\cdot\vec{\sigma}} |\psi_{k}\rangle, \\
&\label{eqn: discrete two-level system}
|\psi_{k+1}\rangle = e^{-i T_s (\vec v_{k} + \vec \Delta_{k}) \cdot \vec\sigma} |\psi_{k}\rangle,    
\end{align}
where \(|\psi_{1|k}\rangle\) denotes the state at the next step in the nominal system, while \(|\psi_{k+1}\rangle\) designates the subsequent state in the uncertain quantum system. 
\subsection{TOMPC for the Nominal System}
Given the discrete-time nominal system as defined in \eqref{eqn: discrete two-level system nominal}, we consider a nonlinear MPC framework. Our research focuses on a time-optimal model predictive control (TOMPC) approach to address controller design for the nominal system. Referring to the proposed method by \cite{verschueren2017stabilizing}, the following optimal control problem (OCP) is considered:
\begin{align}
\label{eqn: general_qMPC}
\begin{aligned}
&\min_{u_0,\ldots,u_{L-1}}  &&J_L\left(|\psi_k\rangle\right) = \sum_{l=0}^{L-1} \theta^{l} \left\| |\bar\psi_{l | k}\rangle - |\psi_{f}\rangle\right\| \\ 
&\text{s.t.} 
&& |\bar\psi_{0 | k}\rangle= |\psi_{k}\rangle, \\
&&& |\bar \psi_{l+1 | k}\rangle = e^{-i \Vec{v}_{l|k}\cdot\vec{\sigma} T_s} |\bar\psi_{l|k}\rangle, \\
&&& |u_{l|k}| \leq B, \\
&&& |\bar\psi_{L | k}\rangle= |\psi_{f}\rangle, 
\end{aligned}
\end{align}
where the cost function \( J_L\left(|\psi_k\rangle\right) \) includes a fixed parameter $\theta \in \mathbb{R}$ and measures the deviation of $|\bar\psi\rangle$ from the desired target state $|\psi_f\rangle$. In our study, the distance between quantum states is defined as the trace distance 
$ \left\| |\bar\psi_{l | k}\rangle - |\psi_{f}\rangle\right\| = \sqrt{1-|\langle \psi_f|\bar\psi_{l | k}\rangle|^2}$ \cite{gilchrist2005distance}. The trace distance is equal to half the Euclidean norm of the difference between the corresponding Bloch vectors. Additionally, the inequality constraints ensure that the magnitude of the control fields $u_{l|k}$ remains bounded by $B$, where $B \in \mathbb{R}$.

Upon finding the optimal value of the cost function \( J_L\left(|\psi_k\rangle\right) \), the optimal control sequence over the prediction horizon \( L \) is given as \( \{u^\star_{0 | k}, u^\star_{1 | k}, \ldots, u^\star_{L-1 | k}\} \), which corresponds to a set of vectors \(\{\vec v^\star_{0 | k}, \vec v^\star_{1 | k}, \ldots, \vec v^\star_{L-1 | k}\}\). Following \eqref{eqn: discrete two-level system nominal}, a unitary matrix is calculated as \( U^\star_{l|k} = e^{-i T_s \vec{v}^\star_{l|k} \cdot \vec{\sigma}} \). In the nominal system, the state \( |\psi_{l | k}\rangle \) at step \( k+l \) is then expressed as 
\begin{align}
    |\psi_{l|k}\rangle = \prod_{j=0}^{l-1} U^\star_{l-1-j|k} |\psi_k\rangle.
\end{align}


\subsection{qMPC for an uncertain quantum system}
Considering the quantum system with uncertainties \eqref{eqn: discrete two-level system}, our objective is to stabilize the quantum state to the target state $|\psi_f\rangle$ despite the presence of uncertainties $H_{\Delta}$. Here, we presume that the initial state is $|\psi_0\rangle$, and the control fields $u_{l|k}$ are piece-wise constant satisfying $|u_{l|k}| \leq B$. 
Given that we optimize the quantum state via the optimization problem \eqref{eqn: general_qMPC}, this process is referred to as quantum TOMPC (qTOMPC) and is outlined in Algorithm \ref{alg:qMPC}.

Algorithm \ref{alg:qMPC} commences by taking the initial state $|\psi_0\rangle$ and target state $|\psi_f\rangle$ of a quantum system, alongside a prediction horizon $L$. The next step involves solving OCP \eqref{eqn: general_qMPC}, such that $\{u^\star_{0 | k}, u^\star_{1 | k}, \ldots, u^\star_{L-1 | k}\}  = \arg\min_{\mathcal{U}} J_L (|\psi_k\rangle)$ is obtained. Following this, the algorithm uses the initial input signal $u^\star_{0|k}$ and predicts the upcoming quantum state $|\psi_{1|k}\rangle$ for the nominal system. Based on the state $|\psi_{1|k}\rangle$, a POVM set $M_{1|k}$ is constructed. Consequently, we measure the quantum state according to the POVM $M_{1|k}$, thus obtaining the post-measurement state, $|\psi_{k+1}'\rangle$. Then, the state \( |\psi_k\rangle \) updates to \( |\psi'_{k+1}\rangle \). This process is continued for a total of \( N \) steps.

\begin{algorithm}[htbp]
\caption{qTOMPC: TOMPC for quantum systems}\label{alg:qMPC}
\begin{algorithmic}[1]
\REQUIRE Initial state $|\psi_0\rangle$ and target state $|\psi_f\rangle$, prediction horizon $L$, performance index $J_L(|\psi_k\rangle)$
\STATE \textbf{repeat} (for each step $k=0,1, \ldots, N-1$)
\STATE Calculate optimal control $\{u^\star_{0|k}, u^\star_{1|k}, \ldots, u^\star_{L-1|k}\}$ by minimizing $J_L(|\psi_k\rangle)$ using the nominal model
\STATE Use $u^\star_{0|k}$ to get $|\psi_{1|k}\rangle$ from the nominal model
\STATE Make a measurement with $M_{1|k}=$ $\left\{\left|\psi_{1|k}\right\rangle\left\langle\psi_{1|k}|, I-| \psi_{1|k}\right\rangle\left\langle\psi_{1|k}\right|\right\}$, and obtain the post-measurement state $\left|\psi'_{k+1}\right\rangle$
\STATE \textbf{update} $|\psi_k\rangle=\left|\psi'_{k+1}\right\rangle$
\end{algorithmic}
\end{algorithm}


Algorithm \ref{alg:qMPC} provides an MPC-based control scheme for uncertain quantum systems that reschedules the quantum trajectory based on measurement result, i.e., post-measurement states. This approach allows for the stabilization of the system state even in the presence of uncertainties.

\section{Theoretical Analysis of Stability and Robustness}
We now examine the stability and robustness properties of the qTOMPC scheme, as described in Algorithm \ref{alg:qMPC}. This analysis is split across two subsections: the stability of the nominal system, and the robustness of a two-level quantum system with uncertainties.

\subsection{Stability Guarantee of the Nominal System}
Using the framework provided by \cite{verschueren2017stabilizing}, we demonstrate OCP \eqref{eqn: general_qMPC} leads to a time-optimal control. Based on their work, we further establish asymptotic stability with our nominal system.

We will show that the OCP \eqref{eqn: general_qMPC} can be equated to the following formulation:
\begin{align}
\label{eqn: TOMPC}
\begin{aligned}
&\min_{L,u_0,\ldots,u_{L-1}}  && L \\ 
&\text{s.t.} 
&& |\bar\psi_{0 | k}\rangle= |\psi_{k}\rangle, \\
&&& |\bar \psi_{l+1 | k}\rangle = e^{-i \Vec{v}_{l|k}\cdot\vec{\sigma} T_s} |\bar\psi_{l|k}\rangle, \\
&&& |u_{l|k}| \leq B \\
&&& |\bar\psi_{L | k}\rangle= |\psi_{f}\rangle, 
\end{aligned}
\end{align}
which is referred to be as the time-optimal control problem.
To equate OCP \eqref{eqn: general_qMPC} and \eqref{eqn: TOMPC}, the following assumptions are required.
\begin{assumption}
\label{ass: target stable}
    The target state $|\psi_f\rangle$ is an eigenstate of the free Hamiltonian $H_0$, i.e., with a control signal $u^\star_{0|k}=0$, we can write $|\psi_f\rangle=e^{-i H_0}|\psi_f\rangle$ in \eqref{eqn: discrete two-level system nominal}.
\end{assumption}
\begin{assumption}
\label{ass: enough horizon}
    Problem \eqref{eqn: TOMPC} is feasible, given $L \geq L^{\star}$, where $L^{\star}$ denotes the minimal horizon required to achieve the desired target quantum state.
\end{assumption}
With Assumptions \ref{ass: target stable} and \ref{ass: enough horizon}, we show that OCP \eqref{eqn: general_qMPC} can be cast as a time-optimal control problem in Theorem \ref{Thm: TOMPC}.
\begin{theorem}[Equivalent Representation]
\label{Thm: TOMPC}
    If a two-level quantum control problem satisfies Assumptions \ref{ass: target stable} and \ref{ass: enough horizon}, there exists \(\theta_1\) such that for all \(\theta \geq \theta_1\), the solution to \eqref{eqn: general_qMPC} satisfies \(|\psi_{l|k}\rangle=|\psi_{f}\rangle\) for \(l=L^\star, \ldots, L\).
\end{theorem}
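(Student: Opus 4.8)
The plan is to prove the claim by showing that it is equivalent to the time-optimal problem \eqref{eqn: TOMPC}: for $\theta$ large enough, the weighted objective $J_L(|\psi_k\rangle)=\sum_{l=0}^{L-1}\theta^l\||\bar\psi_{l|k}\rangle-|\psi_f\rangle\|$ is minimized exactly by a trajectory that reaches $|\psi_f\rangle$ at the minimal step $L^\star$ and then holds there, which is precisely the asserted equality $|\psi_{l|k}\rangle=|\psi_f\rangle$ for $l=L^\star,\ldots,L$. First I would record the two structural consequences of the hypotheses. By Assumption \ref{ass: target stable}, $|\psi_f\rangle$ is an eigenstate of $H_0$, so the admissible choice $u=0$ (feasible since $0\le B$) leaves $|\psi_f\rangle$ invariant up to a global phase; as the trace distance $\sqrt{1-|\langle\psi_f|\bar\psi\rangle|^2}$ is phase-insensitive, a trajectory that is \emph{exactly} at $|\psi_f\rangle$ can remain there contributing exactly zero to every later term of the cost. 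By Assumption \ref{ass: enough horizon}, with $L\ge L^\star$ a feasible ``reach-at-$L^\star$-and-hold'' trajectory exists, and $L^\star$ is the smallest step count for which $|\psi_f\rangle$ is reachable from $|\psi_k\rangle$ under the bounded dynamics \eqref{eqn: discrete two-level system nominal}.

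Next I would set up a comparison. Let $\tau_0$ be the cheapest reach-at-$L^\star$ trajectory; its tail terms $l=L^\star,\ldots,L-1$ vanish, so its cost is a polynomial in $\theta$ of degree $L^\star-1$, and using $\||\cdot\rangle-|\psi_f\rangle\|\le 1$ gives $J_L(\tau_0)\le L^\star\theta^{L^\star-1}$ for $\theta\ge 1$. Now consider any feasible competitor whose state at step $L^\star$ lies at trace distance $\delta>0$ from $|\psi_f\rangle$. Given that step-$L^\star$ endpoint, the cheapest admissible tail snaps to $|\psi_f\rangle$ at step $L^\star+1$ (a small, hence admissible, rotation since the endpoint is near $|\psi_f\rangle$) and then holds, so the tail costs exactly $\theta^{L^\star}\delta$. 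The core mechanism is a sensitivity (exchange) argument: displacing the step-$L^\star$ endpoint away from $|\psi_f\rangle$ by $\delta$ can lower the head cost $\sum_{l=0}^{L^\star-1}\theta^l\||\bar\psi_{l|k}\rangle-|\psi_f\rangle\|$ by at most $O(\theta^{L^\star-1}\delta)$ (a Lipschitz dependence of the optimal head cost on the endpoint), while it \emph{adds} the penalty $\theta^{L^\star}\delta$ from the heavily weighted first tail term. The net change is therefore at least $\theta^{L^\star-1}\delta(\theta-\kappa)-O(\theta^{L^\star-2}\delta)$ for a constant $\kappa$ independent of $\theta$, which is strictly positive once $\theta\ge\theta_1$ for a suitable finite $\theta_1$. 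Hence for $\theta\ge\theta_1$ the minimizer has $\delta=0$, i.e. it reaches $|\psi_f\rangle$ exactly at $L^\star$, and by the invariance above it stays, giving the claim; minimality of $L^\star$ forbids reaching earlier.

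The hard part will be making the sensitivity estimate rigorous, since the trace distance vanishes continuously and admits no uniform positive lower bound: a competitor can approach $|\psi_f\rangle$ arbitrarily closely at step $L^\star$, so its top tail term $\theta^{L^\star}\delta$ sits at the same order $\theta^{L^\star-1}$ as $J_L(\tau_0)$ when $\delta\sim 1/\theta$, and a naive comparison of leading powers of $\theta$ is inconclusive. To close this I would establish the Lipschitz constant $\kappa$ explicitly from the controllability and smoothness of the qubit dynamics: a perturbation of the step-$L^\star$ endpoint by $\delta$ is realizable by control perturbations that propagate through the maps $e^{-iT_s\vec v_{l|k}\cdot\vec\sigma}$ so that each intermediate state, hence each distance $\||\bar\psi_{l|k}\rangle-|\psi_f\rangle\|$, moves by $O(\delta)$ with $\theta$-independent constants; summing the weights $\theta^l$ yields the bound $O(\theta^{L^\star-1}\delta)$ with leading coefficient $\kappa$. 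Compactness of the feasible control set (bounded controls, finite horizon, continuous dependence of the unitaries on $u$) guarantees that the relevant minima are attained and that any optimizer's step-$L^\star$ state indeed converges to $|\psi_f\rangle$ as $\theta\to\infty$, which justifies working in the near-target regime where the one-step snap is admissible. This sensitivity-versus-penalty balance is exactly what upgrades ``approximately reaching'' to ``exactly reaching'' and mirrors the time-optimal equivalence of \cite{verschueren2017stabilizing}.
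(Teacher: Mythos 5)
Your proposal is correct and follows essentially the same route as the paper: the paper formalizes your ``sensitivity versus penalty'' exchange by introducing the Lagrange multiplier of the endpoint constraint $x_{L^\star}=x_f+\epsilon$, proving (Proposition~\ref{prop: theta norm 2}) that its norm is $\mathcal{O}(\theta^{L^\star-1})$ via the same bounded-trajectory/Lipschitz argument you sketch, and then invoking the exact-penalty theorem so that the $\theta^{L^\star}$-weighted term enforces $x_{L^\star}=x_f$ exactly, after which Assumption~\ref{ass: target stable} keeps the state at the target. The only cosmetic differences are that the paper works in Bloch-vector coordinates with the self-dual 2-norm and cites \cite{wright2006numerical} rather than carrying out the comparison by hand.
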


This proof mainly follows \cite{verschueren2017stabilizing}, but we modify some details to fit a two-level quantum system as shown in Appendix \ref{sec: proof theorem TOMPC}.

In Appendix \ref{sec: proof theorem TOMPC}, it is pointed out that even if constructing the cost function with the Euclidean norm, the optimization problem is still equivalent to a time-optimal control problem.
\begin{remark}
     Based on \cite{wright2006numerical}, consider the penalty function $\mu \left|x^{L^\star}-x_f \right|_p$, where $\mu \geq \left| \lambda \right|_D$, $\mu \in \mathbb{R}$, and $p$ represents an arbitrary norm. Here, $D$ denotes the dual norm corresponding to $p$. Given these conditions, the penalty function $\mu \left|x^{L^\star}-x_f \right|_p$ ensures that the problem exactly satisfies the constraint. However, this approach is inapplicable to a quadratic form, which does not satisfy the constraints exactly.
\end{remark}

In our context, the term ‘TOMPC’ refers to a nominal system which is controlled via Algorithm \ref{alg:qMPC} in the absence of measurements. That is, TOMPC only includes Steps 1 to 3 and Step 5 of Algorithm \ref{alg:qMPC}. In the TOMPC process, \( u^\star_{0|k} \) serves to determine \( |\psi_{1|k}\rangle \), which subsequently acts as the initial state for successive iterations. We term the control of the nominal system as TOMPC through our paper, and when uncertainties and quantum measurement are introduced, we refer to the whole process as qTOMPC.

Given Theorem \ref{Thm: TOMPC}, we can establish the asymptotic stability of the TOMPC process for a two-level quantum system in light of Proposition \ref{prop: bound}. 
\begin{proposition}
\label{prop: bound}
Let \( J^\star_L\left(|\psi_k\rangle\right) \) represent the minimal value of \( J_L\left(|\psi_k\rangle\right) \). Then, 
there exists a \(\mathscr{K}_\infty\) function \( \alpha (\cdot) \) such that 
\begin{align}
    J^\star_L\left(|\psi_k\rangle\right) \leq \alpha (\left\| |\psi_k\rangle - |\psi_f\rangle \right\|) . 
\end{align}
Here, the function \( \alpha (\cdot) \) belongs to \(\mathscr{K}_\infty\), that is, \( \alpha (0)=0 \) and is strictly increasing and unbounded.
\end{proposition}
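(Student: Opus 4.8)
The plan is to read off from Theorem~\ref{Thm: TOMPC} that along the optimal trajectory only the first $L^\star$ summands of $J_L$ survive, and then to dominate the resulting finite sum by a $\mathscr{K}_\infty$ function of the initial deviation $d_k := \||\psi_k\rangle - |\psi_f\rangle\|$. Writing $V(|\psi_k\rangle):=J^\star_L(|\psi_k\rangle)$, Theorem~\ref{Thm: TOMPC} gives $|\bar\psi^\star_{l|k}\rangle=|\psi_f\rangle$ for $l\ge L^\star$, so
\begin{equation}
V(|\psi_k\rangle)=\sum_{l=0}^{L^\star-1}\theta^{l}\,\big\||\bar\psi^\star_{l|k}\rangle-|\psi_f\rangle\big\|,
\end{equation}
a finite sum whose $l=0$ term equals exactly $d_k$ because $|\bar\psi_{0|k}\rangle=|\psi_k\rangle$. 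Three structural facts underpin everything that follows: $V$ is finite everywhere (feasibility of \eqref{eqn: TOMPC} for $L\ge L^\star$ by Assumption~\ref{ass: enough horizon}, together with qubit controllability, guarantees a steering control exists); $V\ge 0$ with $V(|\psi_f\rangle)=0$ (by Assumption~\ref{ass: target stable} the target is an equilibrium, so $L^\star=0$ there); and $V$ is continuous in $|\psi_k\rangle$, being the optimal value of a continuous program over a compact, continuously-varying feasible set.

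To exhibit $\alpha$ explicitly I would construct a feasible (not necessarily optimal) steering control and invoke optimality $V\le J_L[\hat u]$. Driving the Bloch vector along the geodesic toward the target, the per-step rotation angle is capped by $\omega_{\max}T_s$, where $\omega_{\max}$ is fixed by the bound $B$ and the drift $r$; hence the target is reached in at most $\hat L(d_k)=\lceil 2\arcsin(d_k)/(\omega_{\max}T_s)\rceil$ steps while every intermediate state stays within trace distance $d_k$ of $|\psi_f\rangle$ (the remaining geodesic angle, hence its sine, decreases monotonically). This yields $V(|\psi_k\rangle)\le\big(\sum_{l=0}^{\hat L(d_k)-1}\theta^{l}\big)\,d_k$; since the bracket is uniformly bounded by $C:=\sum_{l=0}^{\hat L_{\max}-1}\theta^{l}$ over the compact Bloch sphere (recall $d_k\le 1$), we obtain the linear estimate $V(|\psi_k\rangle)\le C\,d_k$, and $\alpha(s):=Cs$ is a genuine $\mathscr{K}_\infty$ majorant.

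If one prefers to avoid the geometric steering estimate, the same conclusion follows abstractly from the three facts above: set $\alpha(s):=s+\sup\{V(|\psi\rangle):\||\psi\rangle-|\psi_f\rangle\|\le\min(s,1)\}$. Compactness makes the supremum finite, continuity of $V$ with $V(|\psi_f\rangle)=0$ makes it continuous with value $0$ at $s=0$, and the added linear term enforces strict monotonicity and the unboundedness required on $[0,\infty)$ (the trace distance itself never exceeds $1$, so the range $(1,\infty)$ is free to fill out).

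Either way the crux is controlling how $V$ degrades as one moves away from the target. In the explicit route the main obstacle is the feasibility of the geodesic steering under the drift, i.e. verifying that the required rotation axis lies in the admissible set $\{(u_x,u_y,r+u_z):|u|\le B\}$ and that intermediate distances genuinely stay below $d_k$; when $B$ is too small to cover every axis one must replace a single geodesic arc by a bounded concatenation of admissible arcs, still reaching the target in a number of steps bounded uniformly over the compact state space. In the abstract route the corresponding difficulty is the continuity (in particular lower semicontinuity) of the value function under the terminal equality constraint $|\bar\psi_{L|k}\rangle=|\psi_f\rangle$, which is exactly where the qubit's controllability and Assumption~\ref{ass: enough horizon} enter.
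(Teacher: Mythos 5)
Your proposal is correct in substance but takes a genuinely different, and considerably more elaborate, route than the paper. The paper's proof is a two-line estimate: since the trace distance between pure states never exceeds \(1\), evaluating \(J_L\) at any feasible control gives \(J^\star_L(|\psi_k\rangle) \leq \sum_{l=0}^{L-1}\theta^l = \frac{\theta^L-1}{\theta-1}\), a uniform constant, after which the existence of the \(\mathscr{K}_\infty\) function \(\alpha\) is simply asserted. You instead build an explicit majorant, either by exhibiting a feasible geodesic steering control whose cost is linear in \(d_k=\||\psi_k\rangle-|\psi_f\rangle\|\), or abstractly via continuity of the value function together with compactness of the Bloch sphere. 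What your route buys is precisely the property the paper's displayed bound does not by itself deliver: a \(\mathscr{K}_\infty\) function is continuous with \(\alpha(0)=0\), so one needs \(J^\star_L\to 0\) as \(|\psi_k\rangle\to|\psi_f\rangle\), not merely a uniform constant bound; your construction supplies this, and you correctly identify where the residual work lies (admissibility of the steering axis in the presence of the drift when \(B\) is small, and lower semicontinuity of the value under the terminal equality constraint). Two caveats on your explicit route: you should also check that your step count \(\hat L(d_k)\) does not exceed the fixed horizon \(L\) (the OCP imposes \(|\bar\psi_{L|k}\rangle=|\psi_f\rangle\) at step \(L\) exactly, so the candidate control must park at the target for the remaining steps, which Assumption~\ref{ass: target stable} permits only once the target is reached within \(L\) steps), and the opening reduction via Theorem~\ref{Thm: TOMPC} is not actually needed for the upper bound. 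What the paper's approach buys is brevity; yours closes a gap it glosses over rather than reproducing its argument.
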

\begin{proof}
Given that the system under consideration is a two-level quantum system, the following relationships hold:
\begin{align}
\begin{split}
        J^\star_L\left(|\psi_k\rangle\right) &\leq J_L\left(|\psi_k\rangle\right) \\
        &=\sum_{l=0}^{L-1} \theta^{l} \left\| |\bar\psi_{l | k}\rangle - |\psi_{f}\rangle\right\| \\
        &\leq \frac{\theta^{L}-1}{\theta-1},
\end{split} 
\end{align}
where \( L \) and \( \theta \) are finite positive values. Therefore, we can find a function \( \alpha (\cdot) \) such that 
\begin{align}
J^\star_L\left(|\psi_k\rangle\right) \leq \alpha (\left\| |\psi_k\rangle - |\psi_f\rangle \right\| ).
\end{align}
\end{proof}
Drawing upon Proposition \ref{prop: bound} and under the assumption that \( \theta >1 \), we can utilize the framework described in \cite[Theorem 2]{verschueren2017stabilizing} to establish the asymptotic stability for qTOMPC with the time-optimal control problem \eqref{eqn: general_qMPC}. Notably, a modification from the 1-norm used in \cite[Theorem 2]{verschueren2017stabilizing} to the Euclidean norm is required.

\subsection{Stability Guarantee Considering Uncertainties and Measurement}
We demonstrate the robustness of the qTOMPC strategy in a two-level quantum system with an uncertain Hamiltonian.
Theorem \ref{prop: Minimum Probability in a Two-Level Closed System-general case}, asserting that, with a fixed sampling time and a bound on the uncertainties, the minimum probability of a correct measurement for each step in a two-level quantum system can be determined. Theorem \ref{thm: exponentially stable} points out that a two-level quantum system controlled by the qTOMPC strategy will approach the target state with high probability despite uncertainties. Furthermore, we present a lower bound on the convergence rate to achieve the desired state based on Theorem \ref{cor:convergence_rate}.

\begin{theorem}[Minimum Success Probability]
\label{prop: Minimum Probability in a Two-Level Closed System-general case}
Consider a two-level quantum system \eqref{eqn: discrete two-level system}. When the sampling time $T_s$ satisfies the condition $T_s < \pi/2\bar{\Delta}$ where $\bar{\Delta}$ denotes the uncertain bound such that $|\Vec{\Delta}|\leq\bar{\Delta}$, then, the probability of transferring to the correct nominal state $|\psi_{1|k}\rangle$ satisfies 
\begin{align}
 |\langle \psi_{1|k}|\psi_{k+1}\rangle|^2 \geq \cos^2(\bar{\Delta} T_s).
\end{align}
$\quad$

\end{theorem}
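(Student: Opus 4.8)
The plan is to collapse the fidelity onto a single relative unitary and then bound its rotation angle, the point being that a sharp (as opposed to naive operator-norm) bound is exactly the spherical law of cosines. First I would write $U_1 = e^{-iT_s\vec v_k\cdot\vec\sigma}$ and $U_2 = e^{-iT_s(\vec v_k+\vec\Delta_k)\cdot\vec\sigma}$, so that $|\psi_{1|k}\rangle = U_1|\psi_k\rangle$, $|\psi_{k+1}\rangle = U_2|\psi_k\rangle$, and hence $\langle\psi_{1|k}|\psi_{k+1}\rangle = \langle\psi_k|W|\psi_k\rangle$ with $W := U_1^\dagger U_2 \in SU(2)$. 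The rotation identity gives $W = \cos\Theta\,I - i\sin\Theta\,\hat m\cdot\vec\sigma$ for some half-angle $\Theta\in[0,\pi]$ and unit axis $\hat m$. Writing $\vec r = \langle\psi_k|\vec\sigma|\psi_k\rangle$ for the Bloch vector of $|\psi_k\rangle$, I get $|\langle\psi_k|W|\psi_k\rangle|^2 = \cos^2\Theta + \sin^2\Theta\,(\hat m\cdot\vec r)^2 \ge \cos^2\Theta$, the minimum being attained when $\hat m\perp\vec r$. This reduces the claim to the state-independent bound $\cos^2\Theta \ge \cos^2(\bar\Delta T_s)$, i.e. to showing $\Theta \le \bar\Delta T_s$.

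Next I would compute $\cos\Theta$ explicitly. Putting $a = T_s|\vec v_k|$, $b = T_s|\vec v_k+\vec\Delta_k|$ and letting $\gamma$ be the angle between $\vec v_k$ and $\vec v_k+\vec\Delta_k$, the product rule $(\hat u\cdot\vec\sigma)(\hat p\cdot\vec\sigma) = (\hat u\cdot\hat p)I + i(\hat u\times\hat p)\cdot\vec\sigma$ applied to $W = U_1^\dagger U_2$ gives, for the scalar ($I$) component, $\cos\Theta = \cos a\cos b + \sin a\sin b\cos\gamma$. This is precisely the spherical law of cosines for a geodesic triangle on the unit sphere with sides $a,b$ and included angle $\gamma$, which is the structure I would exploit.

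The key geometric step reads $\Theta$ as a distance. I would pick any basepoint $p\in S^2$ and realize $T_s\vec v_k$, $T_s(\vec v_k+\vec\Delta_k)$ as tangent vectors $\vec x_1,\vec x_2\in T_pS^2$ keeping their lengths $a,b$ and mutual angle $\gamma$. Under the exponential map $\exp_p(\vec x) = \cos|\vec x|\,p + \sin|\vec x|\,\vec x/|\vec x|$, the images $\exp_p\vec x_1,\exp_p\vec x_2$ have Euclidean inner product $\cos a\cos b + \sin a\sin b\cos\gamma = \cos\Theta$, so their geodesic distance is exactly $\Theta$, while their separation in the tangent space is $|\vec x_1 - \vec x_2| = \sqrt{a^2+b^2-2ab\cos\gamma} = T_s|\vec\Delta_k|$. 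Since $S^2$ has curvature $+1\ge 0$, $\exp_p$ is $1$-Lipschitz (its differential fixes the radial direction and scales the tangential one by $\sin|\vec x|/|\vec x|$, of modulus $\le 1$), whence $\Theta \le |\vec x_1 - \vec x_2| = T_s|\vec\Delta_k| \le \bar\Delta T_s$. The hypothesis $T_s < \pi/2\bar\Delta$ forces $\bar\Delta T_s < \pi/2$, so $\Theta$ and $\bar\Delta T_s$ both lie in $[0,\pi/2)$ where $\cos$ is positive and decreasing; therefore $\cos\Theta \ge \cos(\bar\Delta T_s)$ and $|\langle\psi_{1|k}|\psi_{k+1}\rangle|^2 \ge \cos^2\Theta \ge \cos^2(\bar\Delta T_s)$.

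I expect the comparison inequality $\Theta \le T_s|\vec\Delta_k|$ to be the main obstacle, since it is the one estimate that is strictly sharper than the naive bound $\|U_1-U_2\|\le T_s\bar\Delta$ would give: the latter only yields $2\sin(\Theta/2)\le \bar\Delta T_s$, hence the weaker $\cos\Theta \ge 1-(\bar\Delta T_s)^2/2$, which falls short of $\cos^2(\bar\Delta T_s)$. Making the curvature comparison rigorous requires care when $a$ or $b$ exceeds $\pi$ (so the defining arcs are non-minimizing), but the embedded-sphere computation handles this uniformly because both the identity $\cos\Theta = \langle\exp_p\vec x_1,\exp_p\vec x_2\rangle$ and the global $1$-Lipschitz bound on $\exp_p$ persist for all tangent lengths. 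I would also record explicitly that restricting to the worst-case initial state is legitimate, which the reduction in the first paragraph supplies.
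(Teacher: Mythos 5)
Your proof is correct, and the opening reduction is the same as the paper's: both arguments collapse the overlap to the expectation of the relative unitary $W=U_1^\dagger U_2$ and lower-bound $|\langle\psi_k|W|\psi_k\rangle|^2$ by the square of its scalar ($I$) component, which is exactly the quantity $\cos(T_s|\vec v|)\cos(T_s|\vec v+\vec\Delta|)+\frac{\vec v\cdot(\vec v+\vec\Delta)}{|\vec v||\vec v+\vec\Delta|}\sin(T_s|\vec v|)\sin(T_s|\vec v+\vec\Delta|)$ appearing in the paper (your exact identity $\cos^2\Theta+\sin^2\Theta(\hat m\cdot\vec r)^2$ is a slightly sharper form of the paper's ``$|a+bi|^2\ge|a|^2$'' step). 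Where you genuinely diverge is in proving the state-independent core inequality $\cos\Theta\ge\cos(T_s|\vec\Delta|)$. The paper defines $h(T_s)$ as the difference, rewrites it via product-to-sum identities, and then splits into two cases on $T_s$: Case 1 is an entire appendix showing the Taylor series of $h$ is an alternating decreasing series from fourth order (requiring the coefficient estimates $|C_n|(f_2^2+|\vec\Delta|^2)-|C_{n+2}|>0$), and Case 2 is a separate monotonicity argument. You instead recognize $\cos\Theta=\cos a\cos b+\sin a\sin b\cos\gamma$ as the spherical law of cosines and obtain $\Theta\le|\vec x_1-\vec x_2|=T_s|\vec\Delta|$ from the fact that $\exp_p:T_pS^2\to S^2$ is $1$-Lipschitz (its differential has radial eigenvalue $1$ and tangential eigenvalue $\sin|\vec x|/|\vec x|$), with $|\vec x_1-\vec x_2|^2=a^2+b^2-2ab\cos\gamma=T_s^2|\vec\Delta|^2$ by the Euclidean law of cosines. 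This replaces the paper's two-case Taylor/monotonicity analysis with a three-line curvature comparison, needs no restriction of the form $T_s\le\pi/(|\vec v+\vec\Delta|+|\vec\Delta|)$, is uniform in $a,b$ (including non-minimizing arcs), and uses the hypothesis $T_s<\pi/(2\bar\Delta)$ only at the very end to keep $\cos$ positive and decreasing so that squaring preserves the inequality. You also correctly note that the naive bound $\|U_1-U_2\|\le T_s|\vec\Delta|$ only yields $2\sin(\Theta/2)\le T_s|\vec\Delta|$, which is strictly weaker; the gain comes from comparing chordal distance in the tangent space against geodesic distance on the sphere rather than against chordal distance in the ambient space. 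What the paper's computational route buys in exchange is an explicit quantitative remainder, $h(T_s)>\frac{T_s^4}{6}|\vec v\times\vec\Delta|^2(1-\epsilon_6T_s^2)$, showing how much slack the bound has when $\vec v\nparallel\vec\Delta$; your geometric argument establishes the inequality but not that refinement.
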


\begin{proof}
We consider the projection of the actual state $|\psi_{k+1}\rangle$ onto the reference state $|\psi_{1|k}\rangle$ and calculate the associated probability as follows:
\begin{align}
\label{eqn: deviation probability}
 |\langle \psi_{1|k} | \psi_{k+1}\rangle|^2&= |\langle \psi_{k} | e^{i T_s (\vec v \cdot \vec \sigma+\vec \Delta \cdot \vec \sigma)} e^{-i T_s (\vec v \cdot \vec\sigma)} | \psi_{k}\rangle |^2.
\end{align}

Given a vector $\vec\omega$ represented as $\vec\omega= |\vec\omega| \frac{\vec \omega}{|\vec\omega|}$, we introduce the following relation \cite{sakurai1995modern}:
\begin{align}
\label{eqn:Pauli relation}
e^{-i|\vec\omega| (\frac{\vec \omega}{|\vec\omega|} \cdot \vec \sigma)}= \cos(|\vec\omega|) \mathbb{I} -i (\frac{\vec \omega}{|\vec\omega|} \cdot \vec \sigma) \sin(|\vec\omega|).
\end{align}
Based on the properties of the exponential of a Pauli matrix, we have
\begin{align}
e^{-i (\vec v \cdot \vec\sigma) T_s}= \cos(T_s |\vec v|) \mathbb{I} -i \frac{\vec v \cdot \vec\sigma}{|\vec v|}  \sin(T_s |\vec v|),\\
\begin{aligned}
e^{-i(\vec v \cdot \vec\sigma+\vec \Delta \cdot \vec\sigma) T_s}&= \cos(T_s |\vec v + \vec \Delta|) \mathbb{I} \\
&\phantom{=}-i \frac{(\vec v + \vec \Delta) \cdot \vec\sigma}{|\vec v + \vec \Delta|}  \sin(T_s |\vec v + \vec \Delta|).
\end{aligned}
\end{align}
Therefore, the probability \eqref{eqn: deviation probability} can be written in the following form:
\begin{align}
\begin{aligned}
 |\langle \psi_{1|k} | \psi_{k+1}\rangle|^2=& |\langle \psi_k | [\cos(T_s |\vec v|) \mathbb{I} +i \frac{\vec v \cdot \vec\sigma}{|\vec v|} \sin(|\vec v| T_s)] \\
&\cdot  [\cos(T_s |\vec v + \vec \Delta|) \mathbb{I} \\
&\quad-i \frac{(\vec v + \vec \Delta) \cdot \vec\sigma}{|\vec v + \vec \Delta|}  \sin(|\vec v + \vec \Delta| T_s)] |\psi_k\rangle|^2.
\end{aligned}
\end{align}

Expressing the quantum state $|\psi_k\rangle$ in terms of the Bloch vector $\vec n = (x, y, z)$,
the expectation value can be written as:
\begin{align}
\label{eqn: exp expansion}
\begin{aligned}
\langle \psi_{1|k} | \psi_{k+1} \rangle &= \cos(T_s |\vec v|) \cos(T_s |\vec v + \vec \Delta|) \\
&\phantom{=}+ i \frac{\vec v \cdot \vec n}{|\vec v|} \sin(T_s |\vec v|) \cos(T_s |\vec v + \vec \Delta|)\\
&\phantom{=}- i \frac{(\vec v + \vec \Delta) \cdot \vec n}{|\vec v + \vec \Delta|} \cos(T_s |\vec v|) \sin(T_s |\vec v + \vec \Delta|)  \\
&\phantom{=}+ \frac{\vec v \cdot (\vec v + \vec \Delta)}{|\vec v| |\vec v + \vec \Delta|} \sin(T_s |\vec v|) \sin(T_s |\vec v + \vec \Delta|))\\
&\phantom{=}+ i \frac{\vec v \times (\vec v + \vec \Delta) \cdot \vec n}{|\vec v| |\vec v + \vec \Delta|} \sin(T_s |\vec v|) \sin(T_s |\vec v + \vec \Delta|)),
\end{aligned}
\end{align}
where $\times$ denotes the cross-product operation for vectors. We want to establish a lower bound for $|\langle \psi_{1|k} | \psi_{k+1}\rangle|^2$. We utilize the fact that $|a+bi|^2 \geq |a|^2$ for $a,b \in \mathbb{R}$ and have
\begin{align}
\begin{aligned}
|\langle \psi_{1|k} | \psi_{k+1} \rangle|^2 &\geq |\cos(T_s |\vec v|) \cos(T_s |\vec v + \vec \Delta|) \\
&\phantom{=}+\frac{\vec v \cdot (\vec v + \vec \Delta)}{|\vec v| |\vec v + \vec \Delta|} \sin(T_s |\vec v|) \sin(T_s |\vec v + \vec \Delta|))|^2.
\end{aligned}
\end{align}
It is worth mentioning that the real part in \eqref{eqn: exp expansion} is independent of the values of $(x, y, z)$.


In the special situation where the uncertain vector $\vec{\Delta}$ is parallel to the vector $\vec{v}$, it becomes feasible to multiply the matrices on exponents directly, given the commutation, \( [\vec{v} \cdot \vec{\sigma}, (\vec{v} + \vec{\Delta}) \cdot \vec{\sigma} ]=0 \) in \eqref{eqn: deviation probability}. The expectation can be directly calculated when the condition \( \vec{v} \parallel \vec{\Delta} \) is satisfied as follows:
\begin{align}
\begin{aligned}
    \langle \psi_{1|k} | \psi_{k+1}\rangle &= \langle \psi_{k} | e^{i T_s (\vec{v}+\vec{\Delta}) \cdot \vec{\sigma}} e^{-i T_s \vec{v}\cdot \vec{\sigma}} | \psi_{k}\rangle\\
    &= \langle \psi_{k} | e^{i T_s (\vec{\Delta} \cdot \vec{\sigma})} | \psi_{k}\rangle . 
\end{aligned}
\end{align}
As a result, we can deduce $ \Re{\langle \psi_{1|k} | \psi_{k+1}\rangle}=\cos(T_s |\vec{\Delta}|)$.

In the sampling interval, $T_s < \pi/2 |\vec \Delta|$, we evaluate the difference between the real part of expectation value $\langle \psi_{1|k} | \psi_{k+1} \rangle$ and the value corresponding to the parallel condition. This difference is defined as:
\begin{align}
\label{eqn: abstract inequality}
\begin{aligned}
h(T_s)=&\cos(T_s |\vec v|) \cos(T_s |\vec v + \vec \Delta|) \\
&+ \frac{(\vec v \cdot (\vec v + \vec \Delta))}{|\vec v| |\vec v + \vec \Delta|}\sin(T_s |\vec v|) \sin(T_s |\vec v + \vec \Delta|)) \\
&- \cos(|\vec \Delta| T_s).
\end{aligned}
\end{align}
Using identities relating to the sum and difference of two products  yields:
\begin{align}
\label{eqn: abstract inequality sum difference products}
\begin{aligned}
     h(T_s)&=\frac{|\vec{v}| |\vec{v}+\vec{\Delta}|+\vec{v}\cdot(\vec{v}+\vec{\Delta})}{2|\vec{v}| |\vec{v}+\vec{\Delta}|} \cos(T_s (|\vec{v}+\vec{\Delta}|-|\vec{v}|)) \\
     &\phantom{=} +\frac{|\vec{v}| |\vec{v}+\vec{\Delta}|-\vec{v}\cdot(\vec{v}+\vec{\Delta})}{2|\vec{v}| |\vec{v}+\vec{\Delta}|} \cos(T_s (|\vec{v}+\vec{\Delta}|+|\vec{v}|))\\
 &\phantom{=}-\cos(|\vec\Delta| T_s).
\end{aligned}
\end{align}
To evaluate the difference $h(T_s)$, we consider the following two distinct cases.

\textbf{Case 1:} $0 < T_s \leq \frac{\pi}{|\vec v + \vec \Delta|+|\vec \Delta|}$ and $\vec v \nparallel \vec \Delta$.



The proof of this case is shown in Appendix \ref{sec: Proof of Decreasing Series in general case}.  



\textbf{Case 2:} $\frac{\pi}{|\vec v+\vec \Delta|+|\vec \Delta|} < T_s < \frac{\pi}{2|\vec \Delta|}$ and $\vec v \nparallel \vec \Delta$.

To evaluate $h(T_s)$, we note that when $T_s=\frac{\pi}{|\vec v+\vec \Delta|+|\vec \Delta|}$, the second term on the right hand side of \eqref{eqn: abstract inequality sum difference products} reaches its minimum. Additionally, the sum of the first and third terms is a monotonically increasing function. This assertion can be verified from its differential and employing the conditions \eqref{eqn: amp inequality} and \eqref{eqn: freq inequality} in Appendix \ref{sec: Proof of Decreasing Series in general case}. Therefore, in Case 2, $h(T_s)>0$. 


Through the above analysis, we conclude that the condition \( h(T_s)\geq 0 \) always holds. Consequently, we infer that when \( T_s < \frac{\pi}{2}|\vec{\Delta}| \), \( |\langle \psi_{1|k}| \psi_{k+1}\rangle| \geq \cos (T_s |\vec \Delta|) \). Furthermore, considering \( |\vec{\Delta}|\leq\bar{\Delta} \), it follows that:
\begin{align}
    |\langle \psi_{1|k} | \psi_{k+1} \rangle|^2 \geq \cos^2(\bar{\Delta} T_s).
\end{align}
\end{proof}


Theorem \ref{Thm: TOMPC} presents that the system reaches the target state $|\psi_f\rangle$ by the time optimal step $L^\star$ and Theorem \ref{prop: Minimum Probability in a Two-Level Closed System-general case} gives the minimum probability of the system to remain in the state corresponding to the nominal trajectory. This allows us to establish Proposition \ref{thm: exponentially stable} to verify the robustness of qTOMPC.

\begin{proposition}[Probability Iteration Representation]
\label{thm: exponentially stable}
Given the qTOMPC strategy for the two-level system in Theorem \ref{prop: Minimum Probability in a Two-Level Closed System-general case}, let $P_{\textnormal{tar}}(N)$ denote the probability of the system reaching the target state by Step $N$. A lower bound for the probability $P_{\textnormal{tar}}$ is 
\begin{align}
    P_{\textnormal{tar}} (N) \geq 1- [\sin^2 (\bar\Delta T_s)] \sum_{l=1}^{L} [\cos^2 (\bar\Delta T_s)]^{l-1} F_{N-l} ,
\end{align}
where $F_k$ represents the probability of the system failing to achieve the target state by the $k$-th step.
\end{proposition}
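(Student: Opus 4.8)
The plan is to recast the qTOMPC dynamics as a classical \emph{success-runs} problem and then obtain the stated bound by conditioning on the time of the first failed measurement. Write $p = \cos^2(\bar\Delta T_s)$ and $q = 1-p = \sin^2(\bar\Delta T_s)$. The first step is to interpret each projective measurement in Step 4 of Algorithm \ref{alg:qMPC} as a Bernoulli trial: by Theorem \ref{prop: Minimum Probability in a Two-Level Closed System-general case}, the ``success'' outcome $|\psi'_{k+1}\rangle = |\psi_{1|k}\rangle$ (which collapses the uncertain state back onto the nominal trajectory) occurs with probability $|\langle\psi_{1|k}|\psi_{k+1}\rangle|^2 \ge p$, and the complementary ``failure'' outcome occurs with probability at most $q$. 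The crucial structural observation is that a success places the state exactly on the nominal trajectory, advancing it one step toward $|\psi_f\rangle$, so by Theorem \ref{Thm: TOMPC} a run of $L$ consecutive successes (starting from any restart point) drives the system to the target, while a failure resets the trajectory. Consequently ``reaching the target by step $N$'' is implied by ``achieving a run of $L$ consecutive successes within the first $N$ trials,'' and $F_N = 1 - P_{\textnormal{tar}}(N)$ is bounded above by the probability of seeing no such run.

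I would then derive the recursion by a first-failure decomposition of the worst-case homogeneous model in which every trial succeeds with probability exactly $p$ and fails with probability exactly $q$. For $N \ge L$, condition on the position $l$ of the first failed trial. Having the first failure at position $l \le L$ means the first $l-1$ trials succeed (probability $p^{l-1}$) and the $l$-th fails (probability $q$); no $L$-run has yet occurred, and after the reset the remaining $N-l$ trials form an independent copy of the same problem, contributing $F_{N-l}$. If instead no failure occurs within the first $L$ trials, an $L$-run is completed and the target is reached, so that event contributes nothing to $F_N$. Summing over $l = 1,\dots,L$ yields
\begin{align}
F_N = q \sum_{l=1}^{L} p^{l-1} F_{N-l},
\end{align}
with the boundary convention $F_k = 1$ for $0 \le k < L$. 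Substituting $F_N = 1 - P_{\textnormal{tar}}(N)$, $p = \cos^2(\bar\Delta T_s)$ and $q = \sin^2(\bar\Delta T_s)$ reproduces exactly the claimed expression.

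To transfer this identity into a genuine lower bound for the \emph{actual} uncertain system, I would invoke a monotonicity (coupling) argument. The event ``some $L$-run occurs within the first $N$ trials'' is monotone nondecreasing in the per-step success indicators, and Theorem \ref{prop: Minimum Probability in a Two-Level Closed System-general case} guarantees that each conditional success probability is at least $p$ regardless of the (history-dependent) current state. Coupling the true process to the homogeneous model step by step so that every homogeneous success forces a true success, the true longest run dominates the homogeneous one; hence $P_{\textnormal{tar}}(N)$ for the uncertain system is at least its homogeneous counterpart, and identifying $F_k$ with the worst-case failure probability makes the displayed recursion the asserted bound.

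The main obstacle is the reduction in the first paragraph rather than the combinatorics. I must argue carefully that (i) after a successful projective measurement the post-measurement state coincides with the nominal state $|\psi_{1|k}\rangle$, so that the uniform lower bound $p$ applies at \emph{every} step irrespective of the past, making the trials genuinely comparable to a run process; and (ii) treating the required run length as a fixed $L$ and the per-step success probability as exactly $p$ from every restart is genuinely conservative — a post-failure state might in fact require fewer steps or enjoy a larger success probability, but both effects only increase $P_{\textnormal{tar}}$ and therefore preserve the direction of the inequality. Pinning down these monotonicity claims, together with the boundary values $F_k = 1$ for $k < L$, is where the care is needed.
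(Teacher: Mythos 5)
Your proof is correct and takes essentially the same approach as the paper: both reduce the measurement sequence to a biased coin-toss model via Theorem \ref{prop: Minimum Probability in a Two-Level Closed System-general case} and arrive at the identical renewal recursion $F_N = \sin^2(\bar\Delta T_s)\sum_{l=1}^{L}[\cos^2(\bar\Delta T_s)]^{l-1}F_{N-l}$, the only cosmetic difference being that you condition on the position of the \emph{first} failure while the paper conditions on the position of the \emph{last} failure within the final $L$ trials. Your explicit coupling argument for replacing the history-dependent success probabilities by the uniform worst case $\cos^2(\bar\Delta T_s)$ is in fact more careful than the paper, which simply treats that bound as the exact per-toss probability.
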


\begin{proof}
This problem is analogous to flipping a biased coin. In this interpretation, a ‘head’ (represented as $H$) corresponds to the system remaining on the nominal trajectory, while obtaining a ‘tail’ (represented as $T$) corresponds to a deviation from the nominal trajectory. Additionally, we let $X$ to be an event that can be either $H$ or $T$. Therefore, when we toss a $H$ for $L$ times in a row, the system must evolve to the desired target state since $L > L^\star$ according to Assumption \ref{ass: enough horizon}, and $L^\star$ is the time optimal step number required to reach the target for the nominal system.

Evaluating the probability of failure to remain on the nominal trajectory, $F_k$, we can observe $F_k=1$ for $k=1,\ldots, L-1$, and $F_L=1-[\cos^2 (\bar\Delta T_s)]^L$. 

To calculate $F_{L+1}$, we consider the following hypothetical coin-toss sequences:
\begin{align}
\label{eqn: hypothetical coin-toss sequences}
\begin{aligned}
      &XXX \ldots XXX T\\
    &XXX  \ldots XXTH\\
    &\phantom{XXX..} \vdots \\
    &XXTHH \dots HH\\
    &XTHHH \dots HH.
\end{aligned}
\end{align}
The probability, $F_{L+1}$, is the sum of the probabilities corresponding to this coin-toss sequence:
\begin{align}
\label{eqn: iter L+1}
    F_{L+1} =  [\sin^2 (\bar\Delta T_s)] \sum_{l=1}^{L} [\cos^2 (\bar\Delta T_s)]^{l-1} F_{L+1-l}.
\end{align}
This follows from the fact that the probability corresponding to obtaining a $T$ is $[\sin^2 (\bar\Delta T_s)]$, whereas the probability corresponding to obtaining a $H$ is $[\cos^2 (\bar\Delta T_s)]$.

For any step $N$, the probability $F_N$ only depends on $F_{N-L},\ldots, F_{N-1}$ in the same hypothetical coin-toss sequences \eqref{eqn: hypothetical coin-toss sequences}. That is, 
\begin{align}
\label{eqn: iter N}
    F_N =  [\sin^2 (\bar\Delta T_s)] \sum_{l=1}^{L} [\cos^2 (\bar\Delta T_s)]^{l-1} F_{N-l}.
\end{align}
In light of this, upon evaluating the probability of the system reaching the target state by the $N$-th step, we derive:
\begin{align}
    P_{\textnormal{tar}} (N) \geq 1- [\sin^2 (\bar\Delta T_s)] \sum_{l=1}^{L} [\cos^2 (\bar\Delta T_s)]^{l-1} F_{N-l}.
\end{align}
\end{proof}
This proposition establishes an iteration for the probability $F_k$ and provides a formula of the probability of achieving the target state $P_{\text{tar}}(N)$. We can further construct an upper bound convergence rate for $F_k$, which equivalently signifies the lower bound convergence rate for $P_{\text{tar}}(N)$. To derive this bound, we employ the following two lemmas: Rouché's theorem as described in \cite{bak2010complex} and Descartes's rule of signs presented in \cite{Xiaoshen2004A}.

\begin{lemma}[Rouché's Theorem \cite{bak2010complex}]
Let $f$ and $g$ be analytic functions both inside and on a closed, regular curve $\gamma$. If for every $z \in \gamma$, $|f(z)| > |g(z)|$, then it follows that:
\begin{align}
    \mathbb{Z}(f+g) = \mathbb{Z}(f) \quad \text{within the confines of } \gamma.
\end{align}
Here, $\mathbb{Z}$ denotes the number of roots inside the curve $\gamma$.  
\end{lemma}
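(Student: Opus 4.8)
The plan is to prove the lemma by the standard route: the argument principle combined with a homotopy (deformation) argument. First I would record the consequence of the hypothesis on the curve itself. Since $|f(z)| > |g(z)| \ge 0$ for every $z \in \gamma$, we have $f(z) \neq 0$ on $\gamma$, and moreover $|f(z)+g(z)| \ge |f(z)| - |g(z)| > 0$, so $f+g$ also has no zeros on $\gamma$. This is precisely what is needed to make the logarithmic derivatives integrable along $\gamma$. I would then invoke the argument principle: for any function $h$ analytic inside and on $\gamma$ with no zeros on $\gamma$, the number of its enclosed zeros counted with multiplicity is
\[
\mathbb{Z}(h) = \frac{1}{2\pi i}\oint_\gamma \frac{h'(z)}{h(z)}\,dz.
\]

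Next I would introduce the family $h_t = f + t\,g$ for $t \in [0,1]$, which interpolates between $f$ and $f+g$. On $\gamma$ one has $|h_t(z)| \ge |f(z)| - t\,|g(z)| \ge |f(z)| - |g(z)| > 0$, so each $h_t$ is analytic inside and on $\gamma$ with no zeros on $\gamma$. Define
\[
N(t) = \frac{1}{2\pi i}\oint_\gamma \frac{h_t'(z)}{h_t(z)}\,dz,
\]
so that $N(t) = \mathbb{Z}(h_t) \in \mathbb{Z}$ for every $t$ by the argument principle. The integrand $h_t'/h_t$ is jointly continuous in $(t,z)$ and its denominator is bounded away from zero uniformly on the compact curve $\gamma$; hence $N(t)$ is continuous in $t$. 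A continuous integer-valued function on the connected interval $[0,1]$ must be constant, so evaluating at the endpoints gives $\mathbb{Z}(f) = N(0) = N(1) = \mathbb{Z}(f+g)$ within $\gamma$, which is the claim.

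I expect the main obstacle to be the rigorous justification that $N(t)$ is continuous in $t$. This rests on the uniform lower bound $\inf_{z\in\gamma}|f(z)| - \sup_{z\in\gamma}|g(z)| > 0$, which is valid because $|f|-|g|$ is continuous and strictly positive on the compact set $\gamma$, together with standard estimates for parameter-dependent contour integrals. An alternative that sidesteps the homotopy is to factor $f+g = f\,(1 + g/f)$ and split the logarithmic-derivative integral: the $f'/f$ piece yields $\mathbb{Z}(f)$, while $\phi = 1 + g/f$ maps $\gamma$ into the disc $\{\,|w-1| < 1\,\}$, which avoids the origin and therefore has winding number zero about it, so the remaining piece contributes nothing. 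In that route the delicate step is instead showing that the image curve $\phi \circ \gamma$ does not encircle $0$, which again follows from $|g/f| < 1$ on $\gamma$.
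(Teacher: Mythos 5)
The paper does not prove this lemma at all: Rouché's theorem is imported verbatim from the cited reference \cite{bak2010complex} and used as a black box in the proof of the convergence-rate theorem, so there is no in-paper argument to compare yours against. Your proof is the standard and correct one: the homotopy $h_t = f + tg$ together with the argument principle, using the lower bound $|h_t(z)| \ge |f(z)| - |g(z)| > 0$ on $\gamma$ to keep the logarithmic derivative well defined, and the fact that a continuous integer-valued function on $[0,1]$ is constant. Your alternative route via $f+g = f\,(1+g/f)$ and the observation that $1+g/f$ maps $\gamma$ into $\{|w-1|<1\}$ is in fact essentially the proof given in Bak--Newman itself. One small slip worth fixing: the uniform bound you invoke should be $\inf_{z\in\gamma}\bigl(|f(z)|-|g(z)|\bigr) > 0$, which follows from continuity of $|f|-|g|$ on the compact curve; the quantity you wrote, $\inf_{z\in\gamma}|f(z)| - \sup_{z\in\gamma}|g(z)|$, need not be positive under the pointwise hypothesis $|f(z)|>|g(z)|$. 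The correct infimum suffices for the continuity of $N(t)$, so the argument goes through unchanged.
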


\begin{lemma}[Descartes's Rule of Signs \cite{Xiaoshen2004A}]
Given a polynomial \( p(x) = a_0 x^{b_0} + \dots + a_n x^{b_n} \) with non-zero real coefficients \( a_i \) and integer exponents \( b_i \) such that \( 0 \leq b_0 < b_1 < \dots < b_n \), the number of positive real roots (with multiplicities) of \( p(x) \) is either equal to or less than an even number, the number of sign changes in the sequence \( a_0, \dots, a_n \). The same holds for negative roots considering the coefficients of \( p(-x) \).
\end{lemma}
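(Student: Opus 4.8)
The plan is to prove the two halves of the assertion separately: that the number $Z(p)$ of positive real roots (counted with multiplicity) never exceeds the number $V(p)$ of sign variations in the coefficient sequence $(a_0,\dots,a_n)$, and that $Z(p)$ and $V(p)$ share the same parity. Together these give $V(p)-Z(p)\ge 0$ with $V(p)-Z(p)$ even, which is exactly the statement. First I would normalise: since $x^{b_0}>0$ for $x>0$, dividing $p$ by $x^{b_0}$ alters neither the positive roots nor the sign sequence, so I may assume $b_0=0$, i.e.\ the constant term $a_0$ is nonzero.

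For the parity statement I would compare the sign of $p$ at the two ends of $(0,\infty)$. Here $\mathrm{sign}\,p(0^+)=\mathrm{sign}\,a_0$ and $\mathrm{sign}\,p(+\infty)=\mathrm{sign}\,a_n$, while the factor of $p$ collecting all negative-real and complex-conjugate roots stays strictly positive on $(0,\infty)$ and each positive root of multiplicity $m$ multiplies the sign by $(-1)^m$ as $x$ crosses it. This yields $(-1)^{Z(p)}=\mathrm{sign}(a_0 a_n)$. Reading off the running sign of the coefficient sequence gives $(-1)^{V(p)}=\mathrm{sign}(a_0 a_n)$ as well, so $Z(p)\equiv V(p)\pmod 2$.

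For the bound $Z(p)\le V(p)$ I would induct on the degree $b_n$ using Rolle's theorem. The derivative $p'$ has coefficients $a_i b_i$ for $b_i\ge 1$, and since $b_i>0$ these carry the signs of the $a_i$; hence the sign sequence of $p'$ is that of $p$ with its constant term deleted, so $V(p')\le V(p)$. On the root side, a positive root of $p$ of multiplicity $m$ is a root of $p'$ of multiplicity $m-1$, and Rolle's theorem supplies an additional root of $p'$ strictly between each pair of consecutive distinct positive roots of $p$; tallying both contributions gives $Z(p)\le Z(p')+1$. With the inductive hypothesis $Z(p')\le V(p')$ and the degree-zero base case $Z=V=0$, this chains to $Z(p)\le V(p)+1$, and combining with the parity relation forces $Z(p)\le V(p)$. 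The statement for negative roots then follows verbatim after replacing $p(x)$ by $p(-x)$.

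The main obstacle is the bookkeeping in the Rolle step: I must handle multiplicities correctly, noting that a root of multiplicity $m$ of $p$ contributes $m-1$ (not $m$) to the positive roots of $p'$, and must verify that deleting the constant term can only decrease, never increase, the count of sign variations. The parity argument is comparatively clean, but it is indispensable for the sharp bound, since Rolle's theorem on its own yields only the weaker $Z(p)\le V(p)+1$.
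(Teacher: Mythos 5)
The paper does not prove this lemma at all: Descartes's rule of signs is invoked as a known classical result, stated verbatim and attributed to the cited reference, so there is no in-paper argument to compare yours against. Your blind proof is the standard self-contained derivation (Rolle's theorem for the weak bound $Z(p)\le V(p)+1$, plus the parity identity $(-1)^{Z(p)}=\mathrm{sign}(a_0a_n)=(-1)^{V(p)}$ to sharpen it), and the outline is sound: the normalization to $b_0=0$ is harmless, deleting the leading element of the sign sequence indeed cannot increase the variation count, the multiplicity bookkeeping ($m-1$ from each repeated root plus $k-1$ Rolle roots between the $k$ distinct positive roots gives $Z(p')\ge Z(p)-1$), and the lacunary exponents $b_i$ cause no trouble since $a_ib_i$ keeps the sign of $a_i$ for $b_i\ge 1$. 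The one point worth making explicit if you write this out in full is that the induction on degree must carry the \emph{sharp} statement $Z\le V$ as its hypothesis for $p'$, with the parity step applied at every level to upgrade $Z(p)\le V(p)+1$ to $Z(p)\le V(p)$; as sketched this is valid but should be stated as a strong induction so the base case (a nonzero monomial, $Z=V=0$) and the upgrade are clearly part of the same loop. Since the paper's role for this lemma is purely as an imported tool in the proof of Theorem~\ref{cor:convergence_rate}, supplying a proof is optional, but what you have written would serve.
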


Given Proposition \ref{thm: exponentially stable}, we derive an expression for the convergence rate of \(F_N\), defined as \cite{ortega2000iterative}:
\begin{align}
    \eta = \sup \lim_{N \to \infty} (F_N)^{\frac{1}{N}}.
\end{align}
\begin{theorem}[Convergence Rate]
\label{cor:convergence_rate}
An expression for $\eta$ is given in the following three cases:
\begin{enumerate}
    \item When \(\cos^2 (\bar\Delta T_s) < \frac{L}{L+1}\), 
    \begin{align}
    \eta = \min\left\{ 1-\alpha, \frac{2L}{L+1} - \cos^2 (\bar\Delta T_s) \right\}.
    \end{align}
    Here, $\alpha=\sin^2 (\bar\Delta T_s) [\cos^2 (\bar\Delta T_s)]^L$.  
    \item When \(\cos^2 (\bar\Delta T_s) > \frac{L}{L+1}\), 
    \begin{align}
    \eta = \frac{2L}{L+1} - \cos^2 (\bar\Delta T_s).
    \end{align}
    \item When \(\cos^2 (\bar\Delta T_s) = \frac{L}{L+1}\), 
    \begin{align}
    \eta = \cos^2 (\bar\Delta T_s).
    \end{align}
\end{enumerate}
\end{theorem}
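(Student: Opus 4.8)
The plan is to solve the linear recurrence \eqref{eqn: iter N} through its characteristic polynomial and to identify $\eta$ with the modulus of its dominant root. First I would write $s=\sin^2(\bar\Delta T_s)$ and $c=\cos^2(\bar\Delta T_s)$, substitute the ansatz $F_N=\rho^N$ into \eqref{eqn: iter N}, and sum the finite geometric series $\sum_{l=1}^{L}c^{\,l-1}\rho^{-l}$. After clearing the denominator $(\rho-c)$, the degree-$L$ characteristic equation collapses into
\[
q(\rho)=\rho^{L+1}-\rho^{L}+s\,c^{L}=0 ,
\]
at the cost of introducing the single extraneous factor $\rho=c$ (one checks $q(c)\equiv 0$ using $s=1-c$). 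Thus the genuine characteristic polynomial is $p(\rho)=q(\rho)/(\rho-c)$, and $\rho=c$ is a true mode of the recurrence precisely when $p(c)=c^{L-1}(c-sL)=0$, i.e.\ when $c=L/(L+1)$ — exactly the boundary separating the three cases.

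Next I would pin down which root governs the asymptotics. Since every coefficient $s\,c^{\,l-1}$ in \eqref{eqn: iter N} is positive, with the lag-$1$ coefficient nonzero, the companion matrix is nonnegative and primitive, so Perron--Frobenius guarantees that the dominant root is a simple positive real number that strictly dominates every other root in modulus; Rouch\'e's theorem is then used to certify this localization by exhibiting a splitting $q=f+g$ placing all genuine roots inside the disk of the claimed radius. Descartes's rule of signs applied to $q$ (sign pattern $+,-,+$) shows at most two positive real roots, and $q'(\rho)=\rho^{L-1}\bigl((L+1)\rho-L\bigr)$ shows $q$ is strictly decreasing on $(0,\rho^\star)$ and increasing on $(\rho^\star,\infty)$ with $\rho^\star=L/(L+1)$. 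Hence the two positive roots straddle $\rho^\star$; one is the extraneous $c$, and the other is the Perron root, whose modulus upper-bounds $\eta$.

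With this picture the three cases fall out from the position of $c$ relative to $\rho^\star$. When $c=\rho^\star$ (Case 3) I would show $q(c)=q'(c)=0$, so $c=\rho^\star$ is a double root of $q$, hence the genuine dominant root, giving $\eta=c=\cos^2(\bar\Delta T_s)$. When $c>\rho^\star$ (Case 2) the extraneous root $c$ is the larger positive root while the Perron root lies in $(0,\rho^\star)$; I would bound it above by the reflection $R=2\rho^\star-c=\tfrac{2L}{L+1}-c$ by checking $q(R)\le 0$ and invoking monotonicity of $q$ on $(0,\rho^\star)$, yielding $\eta=R$. When $c<\rho^\star$ (Case 1) the Perron root lies in $(\rho^\star,1)$; there $q(1-\alpha)=\alpha\bigl(1-(1-\alpha)^{L}\bigr)>0$ with $\alpha=s\,c^{L}$, and $q(R)\ge 0$, both hold, so by monotonicity on $(\rho^\star,\infty)$ the root is dominated by each of $1-\alpha$ and $R$; taking the smaller gives $\eta=\min\{1-\alpha,\tfrac{2L}{L+1}-c\}$.

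The main obstacle is the sign fact underlying Cases 1 and 2: that the reflection of the extraneous root across the vertex always over-estimates the genuine root, i.e.\ $\operatorname{sign}q(2\rho^\star-c)=\operatorname{sign}(\rho^\star-c)$. Writing $t=\rho^\star-c$ and $D(t)=q(\rho^\star+t)-q(\rho^\star-t)$ (so $q(2\rho^\star-c)=D(t)$ because $q(c)=0$), one has $D(0)=0$ and $D'(0)=2q'(\rho^\star)=0$ since $\rho^\star$ is critical, so $D$ is an odd polynomial in $t$ beginning at order $t^3$ with leading coefficient proportional to $q'''(\rho^\star)=2L(L-1)(\rho^\star)^{L-3}>0$. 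The crux is to show $D(t)/t>0$ on the whole admissible range of $t$, not merely near $0$; this is exactly where Descartes's rule of signs is applied to $D(t)/t$ viewed as a polynomial in $t^2$. For $L=1$ the odd part vanishes identically (the parabola is symmetric, so the reflection is exact), and for $L=2$ it is exactly $2t^3$; but for general $L$, controlling the full odd part of $q$ rather than its leading term alone is the delicate step, and it also fixes the precise range in which the reflection bound, and hence the stated expression for $\eta$, is valid.
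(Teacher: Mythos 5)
Your proposal follows essentially the same route as the paper: you arrive at the same auxiliary polynomial $q(z)=z^{L+1}-z^{L}+\sin^2(\bar\Delta T_s)\cos^{2L}(\bar\Delta T_s)$ by clearing the factor $\bigl(z-\cos^2(\bar\Delta T_s)\bigr)$, identify $\cos^2(\bar\Delta T_s)$ as the extraneous root, split into the same three cases at $\cos^2(\bar\Delta T_s)=L/(L+1)$, and certify the same two bounds $1-\alpha$ and $\tfrac{2L}{L+1}-\cos^2(\bar\Delta T_s)$ by sign evaluations of $q$. Two local steps differ. First, you invoke Perron--Frobenius for the primitive nonnegative companion matrix to conclude that the dominant genuine root is a simple positive real; the paper instead combines Rouch\'e on $|z|=z_1+\epsilon$, a parity count, and Descartes's rule with a short contradiction to exclude a dominant negative root $z_-$. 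Your version is cleaner and removes that case analysis (it also tidies Case~3, where the paper's argument is somewhat informal). Second, for the key sign fact $\operatorname{sign}\,q\bigl(\tfrac{2L}{L+1}-c\bigr)=\operatorname{sign}\bigl(\tfrac{L}{L+1}-c\bigr)$ you propose the odd-part decomposition $D(t)=q(\rho^\star+t)-q(\rho^\star-t)$ and flag positivity of $D(t)/t$ on the full range as the delicate unfinished step. This does close, and more easily than you anticipate: the coefficient of $t^{k}$ in $D$ at $\rho^\star=L/(L+1)$ evaluates to $2(\rho^\star)^{L-k}\,\frac{L!\,(k-1)}{k!\,(L+1-k)!}\ge 0$ for every odd $k$, so all odd-power coefficients from $t^{3}$ onward are nonnegative and $D(t)/t\ge 0$ outright, with no need for Descartes. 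The paper reaches the same fact by a one-line derivative factorization, $\frac{d}{dz_1}\,q\bigl(\tfrac{2L}{L+1}-z_1\bigr)=\bigl[z_1^{L-1}-(\tfrac{2L}{L+1}-z_1)^{L-1}\bigr]\bigl[L-(L+1)z_1\bigr]\le 0$ together with the vanishing at $z_1=L/(L+1)$. Either route completes the argument, so there is no genuine gap --- only one step you left sketched, which both your method and the paper's finish.
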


\begin{proof}
By applying the z-transform to \eqref{eqn: iter N}, we obtain
\begin{align}
\label{eqn: original z trans}
    F_N (z) = [\sin^2 (\bar\Delta T_s)] F_N (z) \sum_{l=1}^{L} [\cos^2 (\bar\Delta T_s)]^{l-1} z^{-l}.
\end{align}
To simplify the expression for $F_N (z)$, we multiply both sides of \eqref{eqn: original z trans} by $[z-\cos^2 (\bar\Delta T_s)]\cdot z^L$, which yields
\begin{align}
\begin{aligned}
    &z^{L+1} F_N (z) - z^{L} \cos^2 (\bar\Delta T_s) F_N (z)  \\
    =& [\sin^2 (\bar\Delta T_s)] F_N (z) [z-\cos^2 (\bar\Delta T_s)] \sum_{l=1}^{L} [\cos^2 (\bar\Delta T_s)]^{l-1} z^{L-l}\\
    =&[\sin^2 (\bar\Delta T_s)] F_N (z) (z^L-[\cos^2 (\bar\Delta T_s)]^L).
\end{aligned}
\end{align}
This can be rearranged and simplified as
\begin{align}
    z^{L+1} F_N (z) - z^L F_N (z) = -\alpha F_{N} (z).
\end{align}
As a result, we obtain the characteristic polynomial,  
\begin{align}
\label{eqn: z_polynomial}
    g(z)=z^{L+1} - z^L + \sin^2 (\bar\Delta T_s)[\cos^2 (\bar\Delta T_s)]^L = 0.
\end{align}
Since we multiply \eqref{eqn: original z trans} by the term $[z-\cos^2 (\bar\Delta T_s)]$, \(z = \cos^2 (\bar\Delta T_s)\) is an apparent solution of $g(z)=0$, and we label \(z_1 = \cos^2 (\bar\Delta T_s)\) as a root of \eqref{eqn: z_polynomial}. Consequently, the convergence rate of \(F_N\) is determined by the largest root of the polynomial equation \eqref{eqn: z_polynomial} excluding the root $z_1$.

To analyze whether there exists a root with a larger absolute value than $z_1$, we examine the number of roots in the region \(|z| < z_1 + \epsilon\) using Rouché's Theorem, where \(\epsilon\) is an infinitesimal positive quantity such that \(0 < \epsilon \ll z_1\). From \eqref{eqn: z_polynomial}, we define:
\begin{align}
g_1(z) &= z^{L+1} + (1-z_1)z_1^L, \\
g_2(z) &= z^L.
\end{align}
For all $|z|=z_1+\epsilon$, we have
\begin{align}
\label{eqn: g_1 approx}
    |g_1(z)| &\leq (z_1+\epsilon)^{L+1} + (1-z_1)z_1^L \nonumber \\
    &\approx z_1^{L+1}+\epsilon (L+1) z_1^{L} +(1-z_1)z_1^L,\\
    |g_2(z)| &=  (z_1+\epsilon)^L \approx z_1^{L}+\epsilon L z_1^{L-1}.
\end{align}
Evaluating the approximation of $|g_2(z)|-|g_1(z)|>0$ for all $|z|=z_1+\epsilon$, it follows that
\begin{align}
    z_1^{L}+\epsilon L z_1^{L-1} -z_1^{L+1}-\epsilon (L+1) z_1^{L} -(1-z_1)z_1^L>0 ,
\end{align}
which can be simplified to 
\begin{align}
\label{eqn: z inequality}
    z_1+\epsilon L > z_1^2+\epsilon (L+1) z_1 +(1-z_1)z_1.
\end{align}
From this, it follows that \(z_1 < \frac{L}{L+1}\). 

Based on the condition \(z_1 < \frac{L}{L+1}\), we discuss the following three cases.

\textbf{Case 1:} $z_1< \frac{L}{L+1}$

Since \(|g_2(z)| > |g_1(z)|\) for all $|z|=z_1+\epsilon$, according to Rouché's Theorem, there is exactly one maximal root in the region $|z|\geq z_1+\epsilon$.
Given there is an odd root in the region $|z|\geq z_1+\epsilon$, this maximal root must be a real number. 

Based on Descartes's rule of signs, if \(L+1\) is even, then there are two positive real roots, \(z_{+}\) and \(z_1\). If \(L+1\) is odd, the polynomial has two positive real roots, \(z_{+}\) and \(z_1\), and one negative real root, \(z_{-}\).

Considering there is exactly one maximal and real root in the region $|z|\geq z_1+\epsilon$, we have $z_+>z_1$, $z_+>z_1 \geq |z_-|$, or $|z_-|>z_1 \geq z_+$. If the absolute value \(|z_-|\) is the maximal root, the function \(g(z_1 + \epsilon)>0\) since $z_1$ is the largest positive real root. However, our previous analysis demonstrated that when \(z_1 < \frac{L}{L+1}\), the function \(g(z_1 + \epsilon)<0\). This is a contradiction.
Therefore, whether \(L+1\) is even or odd, there is a unique maximal positive real root, denoted as \(z_+\), such that \(z_+ > z_1\).

To find the convergence rate in Case 1, we evaluate the function $g(z)$ at the point \(z = 1-\alpha\),
\begin{align}
\begin{aligned}
       g(1-\alpha) &= (1-\alpha)^{L+1} - (1-\alpha)^{L} + \alpha \\
             &= (1-\alpha)^{L} (-\alpha) + \alpha.
\end{aligned}
\end{align}
Since the inequality \(1 > (1-\alpha)^L\) holds, we can conclude $g(1-\alpha)>0$. $g(1-\alpha)>0$ implies that the maximal root satisfies the inequality \(1-\alpha > z_+ > z_1\).

Furthermore, we observe another upper bound for \(z_+\). When \(z = \frac{2L}{L+1} - z_1\), 
\begin{align}
\begin{aligned}
    f(z_1) &:= g(\frac{2L}{L+1} - z_1) \\
    &= \left(\frac{2L}{L+1} - z_1\right)^{L+1} - \left(\frac{2L}{L+1} - z_1\right)^L + (1-z_1)z_1^L.
\end{aligned}
\end{align}
To examine the sign of \( f(z_1) \), we compute its derivative  with respect to $z_1$:
\begin{align}
\label{eqn: g1 2L/L+1 result}
\begin{aligned}
    \frac{d f(z_1)}{dz_1} &= \left(\frac{2L}{L+1} - z_1\right)^{L-1} \left[(L+1)z_1 - L\right] \\
    &\phantom{=}- z_1^{L-1} \left[L - (L+1)z_1\right] \\
    &= \left[z_1^{L-1}-\left(\frac{2L}{L+1} - z_1\right)^{L-1} \right] \left[L - (L+1)z_1\right].
\end{aligned}
\end{align}
Given the condition \( z_1 < \frac{L}{L+1} \), it follows that the term \( \left[L - (L+1)z_1\right] \) is positive. This condition also yields \(\left(\frac{2L}{L+1} - z_1\right) > z_1\). Consequently, the derivative of the function \( f(z_1) \) is always negative. Given that for \( z_1 = \frac{L}{L+1} \), \( f(z_1) = 0 \), the function \( f(z_1) \) remains positive under the condition \( z_1 < \frac{L}{L+1} \).
Thus, $z_+ \leq \frac{2 L}{L+1}-z_1$.

Hence, if \(\cos^2 (\bar\Delta T_s) < \frac{L}{L+1}\), we have one maximal root $z_+$ in the region $(\cos^2 (\bar\Delta T_s)+\epsilon,\min\left\{ \sin^2 (\bar\Delta T_s) [\cos^2 (\bar\Delta T_s)]^L, \frac{2L}{L+1} - \cos^2 (\bar\Delta T_s) \right\})$. Therefore, the convergence rate can be obtained as
    \begin{align}
        \eta = \min\left\{ 1- \alpha, \frac{2L}{L+1} - \cos^2 (\bar\Delta T_s) \right\}.
    \end{align}
\textbf{Case 2:} $z_1 > \frac{L}{L+1}$

In Case 2, we apply Rouché's Theorem to the region $|z|< \frac{2L}{L+1}-z_1$.
When $|z|=\frac{2L}{L+1}-z_1$,
\begin{align}
    &|g_1(z)| \leq (\frac{2L}{L+1}-z_1)^{L+1}+ (1-z_1)z_1^L,\\
    &|g_2(z)| = (\frac{2L}{L+1}-z_1)^{L}.
\end{align}
The condition $|g_1(z)|<|g_2(z)|$ is equivalent to $g(\frac{2L}{L+1}-z_1)<0$. In light of \eqref{eqn: g1 2L/L+1 result}, if $z_1>\frac{L}{L+1}$, $L-(L+1)z_1$ is negative and $z_1>\frac{2L}{L+1}-z_1$. Thus, the derivative of function $f(z_1)$ is always negative. Since $f(\frac{L}{L+1})=0$, $f(z_1)$ is negative under the condition $z_1>\frac{L}{L+1}$. That is, $|g_1(z)|<|g_2(z)|$ for all $|z|=\frac{2L}{L+1}-z_1$.

According to Rouché's Theorem, we have $L$ roots in the region $|z|< \frac{2L}{L+1}-z_1$. Also, the unique root outside of this region is $z_1$. However, since the root $z_1$ has been added to \eqref{eqn: original z trans}, $z_1$ will not determine the convergence rate of $F_k$. 

Therefore, all roots of must lie \eqref{eqn: original z trans} in the region $|z|< \frac{2L}{L+1}-z_1$. Hence, the convergence rate can be found for the case \(\cos^2 (\bar\Delta T_s) > \frac{L}{L+1}\), as
    \begin{align}
        \eta = \frac{2L}{L+1} - \cos^2 (\bar\Delta T_s).
    \end{align}
\textbf{Case 3:} $z_1 = \frac{L}{L+1}$

In Case 3, we let $N \to \infty$ to check whether $z_1$ can be the convergence rate for \eqref{eqn: original z trans}. Substituting $z_1$ to \eqref{eqn: original z trans}, we obtain
\begin{align}
    \begin{aligned}
             z_1 F_N &= (1-z_1)\sum_{l=1}^{L} z_1^{l-1} F_N z^{1-l}\\
             &= (1-z_1)L F_N .
    \end{aligned}
    \end{align}
This equation shows when $z_1$ is the convergence rate, $z_1$ needs to exactly equal $\frac{L}{L+1}$. Thus, the convergence rate is $\eta=\frac{L}{L+1}$.
\end{proof}
In Theorem \ref{cor:convergence_rate}, since all Cases show that the convergence rate is strictly smaller than $1$, as $N$ approaches infinity, the probability of achieving the target state is given by
\begin{align}
    \lim_{N\rightarrow \infty} P_{\textnormal{tar}} (N)=1.
\end{align}
Additionally, given $\frac{2L}{L+1}-\cos^2 (\bar\Delta T_s) < 1-\alpha$, the convergence rate in all cases is \begin{align}
    \eta=\frac{2L}{L+1} - \cos^2 (\bar\Delta T_s).
\end{align}
Thus, the lower bound on the probability of achieving the target state $P_{\textnormal{tar}} (N)$ can be calculated using this convergence rate. 

\section{Numerical examples}
\label{sec: numerical_simulation}
This section illustrates our results with numerical simulations for a one-qubit system. Firstly, we show the finite steps required to achieve the desired target state to verify TOMPC numerically. Subsequently, we demonstrate the robustness of qTOMPC by comparison with TOMPC, and the gradient-ascent impulse engineering (GRAPE) approach. 
\subsection{TOMPC Applied to the Nominal System}
\label{sec: TOMPC_for_nominal_system}
For the nominal system \eqref{eqn: discrete two-level system nominal}, the free Hamiltonian is designated as $ H_0=\omega_0 \sigma_z$, where $\omega_0=50 MHz$. The control Hamiltonian is given by $H_u = u_x(t) \sigma_x + u_y(t) \sigma_y$. Also, the simulation parameters are as follows: the initial quantum state and the desired target state are the eigenstates of $H_0$, $|0\rangle$ and $|1\rangle$, respectively. In the time-optimal control problem \eqref{eqn: general_qMPC}, the prediction horizon is selected as $L=10$, the weight is $\theta=1.9$, and the control bound is $B=0.5 GHz$. Additional parameters are the sample time $T_s=1 ns$ and the total number of iterations $N=100$. Furthermore, for each iteration in the TOMPC, we utilize IPOPT \cite{wright2006numerical} as the optimization solver.

By employing a TOMPC approach, we determine the optimal time required for the nominal system. The result is shown in Fig. \ref{fig: TOMPC}. In Fig.  \ref{fig: TOMPC}, (a) quantifies the measured probability of the current state relative to the target state $|\langle 1|\psi_{k}\rangle|^2$. (b)  depicts the minimal steps required in every iteration. 
\begin{figure}[htbp]
\centering
\includegraphics[width=0.7\columnwidth]{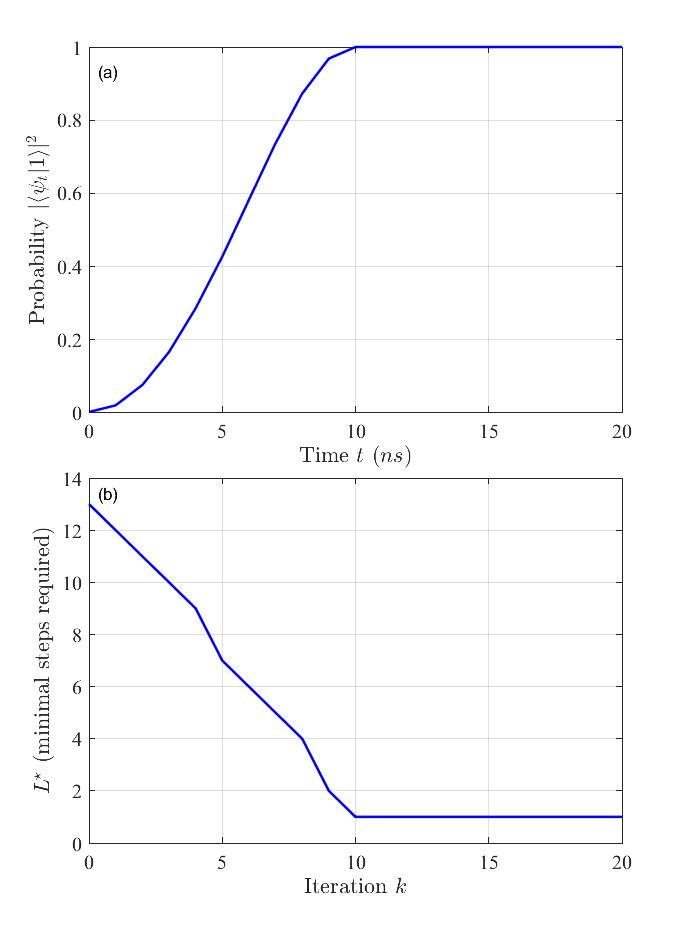}
\caption{(a) the probability $|\langle 1|\psi_{k}\rangle|^2$ of every time $t$. (b) the minimal number of steps required of each iteration.}
\label{fig: TOMPC}
\end{figure}

\subsection{Robustness comparisons}
We compare the robustness of qTOMPC with two other approaches including the TOMPC and the GRAPE methods, where we implement the GRAPE using the QuTiP package  \cite{johansson2012qutip}. 
To assess the efficacy of these three control strategies, we consider the accumulated tracking error and the infidelity between the evolved final state and the desired target state.

In this simulation, we consider the quantum system with uncertainties \eqref{eqn: discrete two-level system}. The parameters are as defined in Subsection \ref{sec: TOMPC_for_nominal_system}, and the uncertain Hamiltonian is formulated as $H_{\Delta} = \Delta_x(k) \sigma_x + \Delta_y(k) \sigma_y$ ($k=0,1, \ldots, N-1$), where $\Delta_x(k)$ and $\Delta_y(k)$ are piece-wise constant scalars representing the discretized components of the uncertain vector $\vec{\Delta}_k$ in \eqref{eqn: discrete two-level system} with a sample time $T_s$.

In the two-level quantum system, we consider uncertainties in three situations: periodic uncertainties, uniform random uncertainties, and Gaussian uncertainties. (i) The periodic uncertainties are characterized by two functions, $\Delta_x (k) = 50 \cos (\omega_x  k T_s +\phi_x) MHz$ and $\Delta_y (k) = 50 \sin (\omega_y k T_s + \phi_y) MHz$ with frequencies $\omega_x, \omega_y \in [15\pi, 25\pi]$ and phase shift $\phi_x, \phi_y \in [-\pi, \pi]$. (ii) In the uniform random uncertainties, $\Delta_x(k), \Delta_y(k) \in[-50 MHz, 50 MHz]$ follow a uniform distribution. (iii) For Gaussian uncertainties, we set $\Delta_x(k), \Delta_y(k)$ to follow a truncated Gaussian distribution over the interval $[-50 MHz, 50 MHz]$ with a mean of $0$ and standard deviation of $25 MHz$.

For the quantum system with uncertainties, we evaluate algorithm performance by calculating accumulated tracking errors over the entire periods $N$, and we repeat simulations $3000$ times for averaged performance.
Here, the accumulated tracking error, denoted by $E_{\text{track}}$, quantifies the deviation from the nominal path and is defined as:
\begin{align}
\label{eqn:error track}
E_{\text {track }}=\sum_{t=0}^{N-1} \left\| |\psi_{1|k}\rangle-|\psi_{1+k}'\rangle \right\|^2,
\end{align}
where $| \psi_{1+k}' \rangle$ is the post-measurement state in Step $4$ of Algorithm \ref{alg:qMPC}.
In the TOMPC and GRAPE approaches, $|\psi_{1|k}\rangle$ represents the evolution of the nominal system. Conversely, the state $|\psi_{1+k}'\rangle$ is obtained by evolving the system including uncertainties.

A comparative evaluation is conducted on the three algorithms, qTOMPC, TOMPC, and GRAPE, considering various uncertainties. Our results demonstrate the qTOMPC algorithm has superior performance in terms of the accumulated tracking error and infidelity, as illustrated in Tables \ref{tab: error_track_close} and \ref{tab: infidelity_close}. Tables \ref{tab: error_track_close} and \ref{tab: infidelity_close} report the average accumulated 
tracking error ($E_{\text{track}}$) and infidelity, respectively, under periodic, uniform random, and Gaussian uncertainties. 
Furthermore, we show the mean of the probability of $|\langle 1|\psi_{1+k}'\rangle|^2$ and the accumulated tracking error in Fig. ~\ref{fig: mean} under the influence of uniform uncertainties. In Fig. ~\ref{fig: median}, we illustrate the median and interquartile range (IQR) for this uncertainties.

The numerical results reveal that the qTOMPC algorithm enhances robustness compared to TOMPC by integrating quantum measurements. Additionally, the qTOMPC leads to a high-fidelity quantum target state and small accumulated tracking error.
\begin{table}[htbp]
\centering
\caption{Average accumulated tracking error for qTOMPC, TOMPC, and GRAPE algorithms under periodic, uniform random, and Gaussian uncertainties.}
\begin{tabular}{|l|l|l|l|l|}
\hline
$E_{\text {track }}$                & qTOMPC & TOMPC & GRAPE \\ \hline
Periodic uncertainties       &  $0.185 $   &   $9.98$   &     $8.39 $        \\ \hline
Uniform random uncertainties &  $0.129$    &  $7.51$   &     $5.69$     \\ \hline
Gaussian uncertainties       &  $7.53 \times 10^{-2}$   &   $4.31$   &     $3.41 $    \\ \hline
\end{tabular}
\label{tab: error_track_close}
\end{table}

\begin{table}[htbp]
\centering
\caption{Infidelity between final and target states for qTOMPC, TOMPC, and GRAPE algorithms under periodic, uniform random, and Gaussian uncertainties.}
\begin{tabular}{|l|l|l|l|l|}
\hline
Infidelity             & qTOMPC & TOMPC & GRAPE \\ \hline
\begin{tabular}[c]{@{}l@{}}Periodic\\uncertainties\end{tabular} & $1.01 \times 10^{-2}$ & $1.91 \times 10^{-1}$ & $1.55 \times 10^{-1}$ \\ \hline
\begin{tabular}[c]{@{}l@{}}Uniform random\\uncertainties\end{tabular} & $6.82 \times 10^{-3}$ & $1.45 \times 10^{-1}$ & $1.09 \times 10^{-1}$ \\ \hline
Gaussian uncertainties       &  $5.72 \times 10^{-3} $   &   $8.37 \times 10^{-2}$   &     $6.54 \times 10^{-2}$    \\ \hline
\end{tabular}
\label{tab: infidelity_close}
\end{table}

\begin{figure}[htbp]
\centering
\includegraphics[width=0.7\columnwidth]{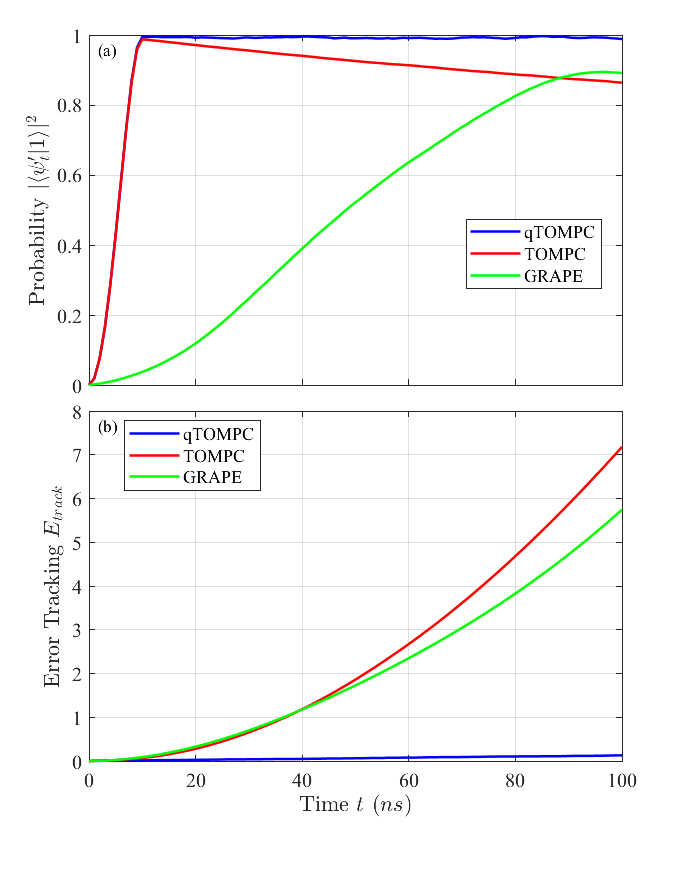}
\caption{Performance of three control strategies for quantum systems with uniform random uncertainties. (a) the probability derived from measurements of the target state $|1\rangle$ over discrete time; (b) a comparative analysis of the accumulated error over time. Specifically, the blue, red, and green lines represent the trajectories associated with the qTOMPC, TOMPC, and GRAPE algorithms, respectively. To assess average behavior, the Monte Carlo method is employed, simulating the system $3000$ times.}
\label{fig: mean}
\end{figure}

\begin{figure}[htbp]
\centering
\includegraphics[width=0.7\columnwidth]{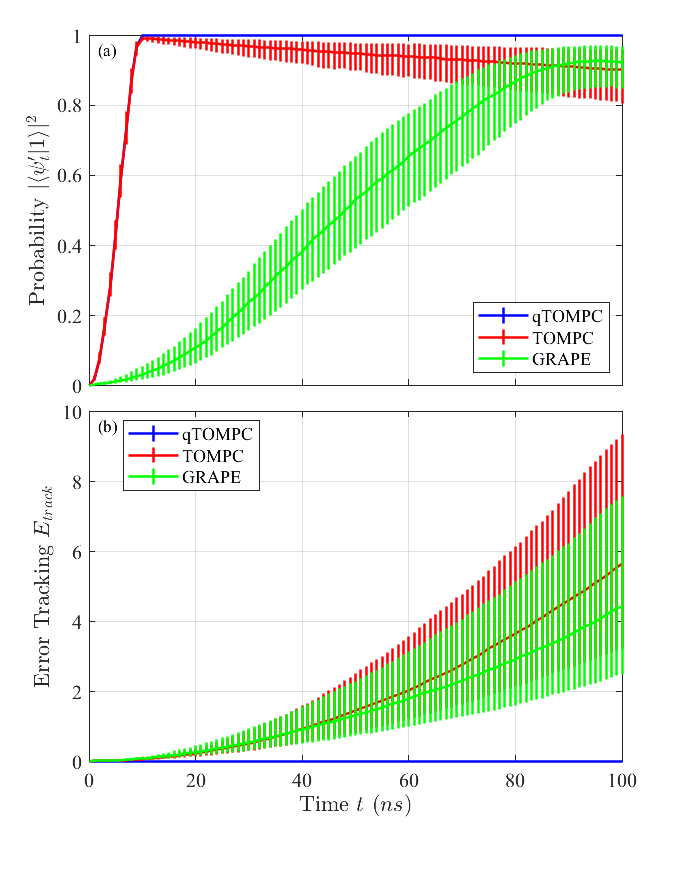}
\caption{This figure parallels Fig. ~\ref{fig: mean} but focuses on the median and interquartile range (IQR). The same simulation conditions, trajectory comparisons, and algorithm references as those in Fig. \ref{fig: mean} are applied.}
\label{fig: median}
\end{figure}

\section{Conclusions}
This paper proposed an MPC scheme to address uncertainties in the system Hamiltonian for two-level quantum systems. The control input sequence is determined through open loop MPC, enabling online control planning and the flexibility to adjust the next quantum trajectory based on the post-measurement quantum state. A sufficient condition for designing a control law that guarantees the desired level of robustness was provided. The proposed qMPC approach has potential applications in robust quantum information processing for quantum systems with uncertainties and can be applied with various measurement methods for effective quantum control based on measurement outcomes. Future work will focus on developing qMPC for multi-level quantum systems and open quantum systems.

\appendix
\section{Proof of Theorem \ref{Thm: TOMPC}}
\label{sec: proof theorem TOMPC}
Given the representation \(|\psi\rangle \langle \psi| = \frac{1}{2}(\mathbb{I}+x_{1} \sigma_x+x_{2} \sigma_y+x_{3} \sigma_z)\), a quantum state \(|\bar\psi_{l|k}\rangle\) corresponds to a Bloch vector \(x_l= [x_{l,1}, x_{l,2}, x_{l,3}]^T \in \mathbb{R}^3\) and we define the initial state \(|\bar\psi_{k}\rangle\) corresponding to Bloch vector $\bar x$. By expressing the quantum state as a Bloch vector, the discrete-time evolution \eqref{eqn: discrete two-level system nominal} can be reformulated as \(x_{l+1}=f_d (x_{l},u_l)\). The function \(f_d(\cdot, \cdot)\) can be determined using the methods described in \cite{yang2013exploring}.
We introduce the objective function:
\begin{align}
    \phi (x_0, \ldots, x_{L^\star-1}) := \frac{1}{2}\sum_{l=0}^{L^{\star}-1} \theta^{l}\| x_l-x_f\|_2,
\end{align}
where \(x_f\) represents $|\psi_f\rangle$ using the Bloch vector. The optimization problem is transformed as:
\begin{align}
\label{eqn: part_TOMPC}
\begin{aligned}
&\min_{u_0,\ldots,u_{L^\star-1}}  && \phi (x_0, \ldots, x_{L^\star-1}) \\ 
&\text{s.t.} 
&& x_0= \bar x, \\
&&& x_{l+1} = f_d (x_l,u_l), \\
&&& |u_{l}| \leq B, \\
&&& x_{L^\star}= x_f+\epsilon, 
\end{aligned}
\end{align}
where \(\epsilon \in \mathbb{R}^{3}\). To incorporate the constraint \(x_{L^\star}=x_f+\epsilon\) into \(\phi\), we employ the Lagrange multiplier \(\lambda_{L^\star}\), which belongs to \(\mathbb{R}^3\). The Lagrange multiplier corresponding to each axis \(i=1,2,3\) is expressed as:
\begin{align}
    \lambda_{L^{\star}, i}(\theta) = \left.\frac{\partial \phi\left(x^{\star}\left(x_{f,i}+\epsilon_i, \theta\right)\right)}{\partial \epsilon_i}\right|_{\epsilon_i=0} ,
\end{align}
where \(x_{f,i}\) indicates the value on the \(x_f\) axis. 
The path \(x^{\star}\left(x_{f,i}+\epsilon_i, \theta\right)=[x_0^\star(x_f+\epsilon_i,\theta),\ldots,x_{L^{\star}}^\star(x_{f,i}+\epsilon_i,\theta)]\), and for \(j \neq i\), \(\epsilon_j=0\).


Following this, we present a proposition regarding the Lagrange multiplier \(\lambda_{N^{\star}}(\theta)\):
\begin{proposition}
\label{prop: theta norm 2}
  For an arbitrary initial state \(x_l\), a threshold \(\theta_1\) exists such that \(\theta^{L^{\star}} \geq \left\|\lambda_{L^{\star}}(\theta)\right\|_{2}\) whenever \(\theta \geq \theta_1\).
\end{proposition}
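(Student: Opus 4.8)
The plan is to read $\lambda_{L^\star}(\theta)$ as the sensitivity of the optimal value of the perturbed problem \eqref{eqn: part_TOMPC} with respect to the terminal perturbation $\epsilon$, and then to show that this sensitivity grows no faster than $\theta^{L^\star-1}$. Write $V(\epsilon,\theta)$ for the optimal value of \eqref{eqn: part_TOMPC}. By the standard Lagrangian (envelope) sensitivity result, the definition of $\lambda_{L^\star,i}(\theta)$ as $\partial_{\epsilon_i}\phi\big(x^\star(x_{f,i}+\epsilon_i,\theta)\big)\big|_{\epsilon_i=0}$ coincides with the $i$-th component of $\nabla_\epsilon V(0,\theta)$, so it suffices to bound $\|\nabla_\epsilon V(0,\theta)\|_2$. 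The reduction I would make is the following: if I can exhibit a constant $C$, \emph{independent of} $\theta$, such that
\begin{equation}
\|\lambda_{L^\star}(\theta)\|_2 \le C\,\theta^{L^\star-1},
\end{equation}
then $\theta^{L^\star}\ge C\,\theta^{L^\star-1}$ holds for every $\theta\ge C$, and the proposition follows with $\theta_1:=\max\{2,C\}$.

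The core step is to establish this bound by a Lipschitz estimate on $V(\cdot,\theta)$ near $\epsilon=0$. First I would note that, because $L^\star$ is the minimal horizon (Assumption \ref{ass: enough horizon}), the optimal trajectory satisfies $x_l^\star\neq x_f$ for every $l<L^\star$; hence each running term $\tfrac12\theta^l\|x_l-x_f\|_2$ is differentiable at $x_l^\star$ with gradient of Euclidean norm exactly $\tfrac12\theta^l$, so the objective is smooth along the optimal path and no subgradient blow-up occurs. Next, given a small tangent perturbation $\epsilon$ of the terminal Bloch vector, I would construct a feasible trajectory for the $\epsilon$-problem by perturbing the optimal controls of the $\epsilon=0$ problem only over the last few stages; controllability of the qubit (the two transverse controls $u_x,u_y$ span the tangent plane of the sphere) guarantees an admissible control correction of size $O(\|\epsilon\|)$ steering the terminal state to $x_f+\epsilon$. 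This correction displaces each $x_l^\star$ by $O(\|\epsilon\|)$, so the cost changes by at most
\begin{equation}
\tfrac12\sum_{l=0}^{L^\star-1}\theta^l\,O(\|\epsilon\|)=\tfrac12\,\frac{\theta^{L^\star}-1}{\theta-1}\,O(\|\epsilon\|)\le C\,\theta^{L^\star-1}\,\|\epsilon\|
\end{equation}
for $\theta\ge 2$. Applying the same construction to $\pm\epsilon$ yields a two-sided Lipschitz bound, hence $\|\nabla_\epsilon V(0,\theta)\|_2\le C\,\theta^{L^\star-1}$, which is the estimate required above.

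The main obstacle I anticipate is guaranteeing that $C$ can be chosen uniformly in $\theta$. This requires that (i) the linearized one-step maps along the optimal trajectory, i.e.\ the derivatives of $f_d$, are bounded independently of $\theta$, which holds because the evolution is a rotation on the compact Bloch sphere with controls capped by $B$; and (ii) the local reachability constant — the amount of admissible control needed per unit of terminal displacement — is bounded away from zero uniformly in $\theta$, which is the delicate point when the time-optimal controls saturate at $\pm B$. I would address (ii) by perturbing only in control directions that remain admissible, exploiting that the two transverse controls give a two-dimensional reachable tangent space matching the two-dimensional space of admissible $\epsilon$, and by invoking compactness of the Bloch sphere to keep all constants uniform as $\theta\to\infty$. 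Once (i)–(ii) are in hand the bound $\|\lambda_{L^\star}(\theta)\|_2\le C\,\theta^{L^\star-1}$ is immediate, and the proposition follows.
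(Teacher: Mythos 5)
Your proposal is correct in outline and lands on the same quantitative mechanism as the paper's proof, namely the estimate $\left\|\lambda_{L^{\star}}(\theta)\right\|_{2}=\mathcal{O}(\theta^{L^{\star}-1})$ produced by the geometric sum $\sum_{l=0}^{L^{\star}-1}\theta^{l}$ times a $\theta$-independent sensitivity constant, after which $\theta_1$ is chosen so that $\theta^{L^{\star}}$ dominates. Where you differ is in how that $\theta$-uniform constant is obtained. The paper works on the dual/sensitivity side: it asserts the uniform bound \eqref{eqn: bounded disturbed trajectory} on $\partial x_l^{\star}/\partial\epsilon_i$ from Lipschitz continuity and invertibility of $f_d$ on the compact Bloch sphere, then differentiates $\phi$ along the perturbed optimal path in \eqref{eqn: abs lambda}. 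You work on the primal side: you construct an explicit feasible competitor for the perturbed problem \eqref{eqn: part_TOMPC} by correcting the controls over the last stages, obtaining a two-sided Lipschitz bound on the optimal value and hence on the multiplier via the envelope identity. Your route buys you freedom from assuming differentiability of the arg-min map $\epsilon\mapsto x^{\star}(\cdot,\theta)$, needing only Lipschitzness of the value function; the paper's route is shorter because it takes that regularity for granted. Both arguments share the same soft spot, which you correctly flag: the paper's bound $\beta$ is a statement about the sensitivity of the \emph{optimizer}, not merely of $f_d$, and is not actually derived, while your version needs a uniform local-reachability constant when the time-optimal controls saturate at $\pm B$, where admissible perturbations form a cone rather than the full plane (surjectivity onto the tangent plane at $x_f$ does hold, since by Assumption \ref{ass: target stable} $x_f$ is a pole of the Bloch sphere where the $\sigma_x$ and $\sigma_y$ rotations span the tangent directions). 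Neither treatment fully closes this point, so your proposal is on an equal rigor footing with the paper's proof.
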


\begin{proof}
All \( x_k \) lie on the Bloch sphere, which means that the distance between any two such points is bounded. This fact implies the function \( f_d \) is Lipschitz continuous. Furthermore, the function \( f_d \) is known to be differentiable and invertible. Thus, we can deduce the following relationship:
\begin{align}
\label{eqn: bounded disturbed trajectory}
    \left|\frac{\partial x_l^{\star}(x_{f,i}+\epsilon_i, \theta)}{\partial \epsilon_i}\right| \leq \beta, \quad \forall l \in \{0, \ldots, L^{\star}\},
\end{align}
where \( \beta \in \mathbb{R} \).

With the bounded trajectory \eqref{eqn: bounded disturbed trajectory} and using the fact that the quantum states are bounded on the Bloch sphere, the term $|x_l^\star (x_{f,i}+\epsilon_i,\theta)-x_f|$ has a finite value and we can find a bound on the Lagrange multipliers $\lambda_{L^{\star}, i}(\theta)$ as 
\begin{align}
\label{eqn: abs lambda}
\begin{aligned}
\left|\lambda_{L^{\star}, i}(\theta)\right|= & \left|\left.\frac{\mathrm{d} \phi\left(x^{\star}\left(x_{f,i}+\epsilon_i,  \theta\right)\right)}{\mathrm{d} \epsilon_i}\right|_{\epsilon_i=0} \right| \\
= & \left|\frac{\partial \phi\left(x^{\star}(x_{f,i}, \theta)\right)}{\partial x^{\star}(x_{f,i}, \theta)} \cdot \frac{\partial x^{\star}(x_{f,i}, \theta)}{\partial \epsilon_i}\right| \\
\leq & \theta^0 \frac{\left| x^\star_0(x_{f,i},\theta)-x_{f,i} \right|}{2\left\|x^\star_0(x_{f},\theta)-x_f\right\|_2} \left\|\frac{\partial x_0^{\star}(x_{f,i}, \theta)}{\partial \epsilon_i}\right\|_1+\ldots \\
+&\theta^{L^{\star}-1}  \frac{\left| x^\star_{L^\star-1}(x_{f,i},\theta) -x_f \right|}{2\left\|x^\star_0(x_{f},\theta)-x_f\right\|_2} \left\|\frac{\partial x_{L^{\star}-1}^{\star}(x_{f,i}, \theta)}{\partial \epsilon_i}\right\|_1 \\
= & \mathcal{O}\left(\theta^{L^{\star}-1}\right), \quad \theta \rightarrow \infty.
\end{aligned}
\end{align}
From \eqref{eqn: abs lambda}, we deduce:
    \begin{align}
        \begin{aligned}
\left\|\lambda_{L^{\star}}(\theta)\right\|_{2} & = \sqrt{\lambda^2_{L^{\star},1}(\theta)+\lambda^2_{L^{\star},2}(\theta)+\lambda^2_{L^{\star},3}(\theta)}\\
&= \mathcal{O}(\theta^{L^\star-1}) .
\end{aligned}
    \end{align}
Therefore, there exists a \(\theta \geq \theta_1\) such that \(\theta^{L^\star} \geq \left\|\lambda_{L^{\star}}(\theta)\right\|_{2}\).
\end{proof}

Proposition \ref{prop: theta norm 2} provides the condition \(\theta^{L^\star} \geq \left\|\lambda_{L^{\star}}(\theta)\right\|_{2}\), which can show OCP \eqref{eqn: general_qMPC} corresponds to the objective function $\phi$ and adding a penalty function. Specifically, 
we reformulate OCP \eqref{eqn: general_qMPC} using the Bloch vector form \eqref{eqn: part_TOMPC} as
\begin{align}
    \begin{aligned}
    &\min_{u_0,\ldots,u_{L-1}}  && \phi (x_0, \ldots, x_{L^\star-1}) + \frac{1}{2}\sum_{l=L^\star}^{L} \theta^{l} \left\| x_l - x_f\right\|_2 \\ 
    &\text{s.t.} 
    && x_0= \bar x, \\
    &&& x_{l+1} = f_d (x_l,u_l), \\
    &&& |u_{l}| \leq B, \\
    &&& x_{L}= x_f. 
    \end{aligned}
    \end{align}
In the special case where \(L=L^{\star}\), the endpoint constraint is represented as \(x_{L^{\star}}=x_f\), which validates the theorem for this specific scenario. 

When \(L > L^{\star}\) and for any given \(\theta \geq \theta_1\), the term \(\frac{1}{2}\theta^{L^{\star}}\left\|x_{L^{\star}}-x_f\right\|_2\) serves as a penalty that enforces the constraint \(x_{L^{\star}}=\epsilon=x_f\). This holds according to \cite{wright2006numerical}, given the self-duality of the 2-norm and the condition \(\theta^{L^\star}>\left\|\lambda_{L^\star}(\theta)\right\|_2\) from Proposition \ref{prop: theta norm 2}. Furthermore, based on Assumption \ref{ass: target stable}, proving \(x_{L^{\star}}=x_f\) also ensures that \(x_l = x_f\) for \(l = L^{\star} + 1, \ldots, L\).
\section{Proof of $h(T_s) > 0$ for $0 < T_s \leq \frac{\pi}{|\vec v + \vec \Delta|+|\vec \Delta|}$ and $\vec v \nparallel \vec \Delta$}
\label{sec: Proof of Decreasing Series in general case}
We will demonstrate that the Taylor expansion 
\begin{align}
    h(T_s)=\sum_n \frac{C_n}{n !} T_s^n
\end{align}
is an alternating decreasing series, starting from the $4$th-order term. Here, $n$ covers only even integers, and $C_n$ are the coefficients of the Taylor expansion. Based on the properties of an alternating decreasing series, $h(T_s)$ must exceed its truncated Taylor series.

For convenience, we write 
\begin{align}
    a_1=\frac{|\vec{v}| |\vec{v}+\vec{\Delta}|+\vec{v}\cdot(\vec{v}+\vec{\Delta})}{2|\vec{v}| |\vec{v}+\vec{\Delta}|}    
\end{align}
and 
\begin{align}
    a_2=\frac{|\vec{v}| |\vec{v}+\vec{\Delta}|-\vec{v}\cdot(\vec{v}+\vec{\Delta})}{2|\vec{v}| |\vec{v}+\vec{\Delta}|}.   
\end{align}
We also denote $f_1=|\vec{v}+\vec{\Delta}|-|\vec{v}|$ and $f_2=|\vec{v}+\vec{\Delta}|+|\vec{v}|$.
As a result, the function $h(T_s)$ can be written as
\begin{align}
    h(T_s) = a_1 \cos(f_1 T_s) + a_2 \cos(f_2 T_s) - \cos(|\vec\Delta| T_s).
\end{align}
By employing the Cauchy–Schwarz inequality, we ascertain that
\begin{align}
\label{eqn: amp inequality}
     0< a_1, a_2 < 1.
\end{align}
Furthermore, by using the Triangle inequality, it follows that
\begin{align}
\label{eqn: freq inequality}
    f_1<|\vec \Delta|<f_2.
\end{align}

When conducting a direct calculation of the Taylor series expansion up to the 4th-order term, we find that both the zeroth and the second-order terms are zero, whereas the 4th-order term yields $\frac{T_s^4}{6}|\vec v \times \vec \Delta|^2$. In this context, we consider scenarios where $\vec v$ is not parallel to $\vec \Delta$. As such, it is established that the 4th-order term is strictly positive. Following inequality \eqref{eqn: freq inequality}, it follows that the sign of the Taylor series expansion of $h(T_s)$, from the 4th-order term, aligns with the Taylor series expansion of $a_2 \cos(f_2 T_s)$.

The process under consideration directs us to ascertain that the Taylor series expansion of the function $h(T_s)$ can be viewed as an alternating series commencing from the 4th-order term.

To verify $h(T_s)$ is a monotonically decreasing sequence, we show that the ratio of $(n+2)$th-order term to the $n$th-order is less than $1$.

For $n=4$, the difference $|C_4 | (f_1^2+f_2^2+|\vec \Delta|^2)-|C_{6}|$ can be evaluated as follows:
\begin{align}
    \begin{aligned}
        &|C_4 | (f_1^2+f_2^2+|\vec \Delta|^2)-|C_{6}| \\
        &= (a_1 f_1^4 + a_2 f_2^4 -|\vec \Delta|^4) (f_1^2+f_2^2+|\vec \Delta|^2)\\
        &\phantom{==}- (a_1 f_1^6 + a_2 f_2^6 -|\vec \Delta|^6) \\
        &= f_1^2 f_2^2 (a_1 f_1^{2} +a_2 f_2^{2}) + f_1^2 |\vec \Delta|^2 (a_1 f_1^{2} -|\vec \Delta|^{2}) \\
        &\phantom{==} + f_2^2 |\vec \Delta|^2 (a_2 f_2^{2} -|\vec \Delta|^{2}) \\
        &= f_1^2 f_2^2 |\vec \Delta|^{2} - a_2 f_1^2 f_2^2 |\vec \Delta|^{2}
        -  a_1 f_1^2 f_2^2 |\vec \Delta|^{2} \\
        &= f_1^2 f_2^2 |\vec \Delta|^{2} (1-a_1-a_2) =0 .
        \end{aligned}
\end{align}
We now show that the ratio of the $6$th-order term to the $4$th-order term is strictly less than $1$ as follows:
\begin{align}
\label{eqn: inequality Taylor expansion_appendix 4}
\begin{aligned}
      \frac{4!|C_{6}|}{6!|C_{4}|} T_s^2 & = \frac{4!}{6!}(f_2^2 +|\vec \Delta|^2+f_1^2)T_s^2 \\
      &< \frac{3}{30} (|\vec v +\vec \Delta|+|\vec v|)^2 T_s^2 \\
    &\leq \frac{1}{10} \pi^2 < 1,
\end{aligned}
\end{align}
which follows from inequality \eqref{eqn: freq inequality} and the condition of Case 1,  $f_2 T_s \leq \pi$. It follows that the ratio of the $6$th-order term to the $4$th-order term is indeed less than $1$.

When $n \geq 6$, we can show the ratio of the $n+2$th-order term to the $n$th-order term less than $1$ through the inequality $|C_n|(f_2^2+|\vec\Delta|^2)-|C_{n+2}|>0$. To demonstrate this inequality, firstly, we use the formulas $a_1+a_2=1$ and $a_1 f_1^2+a_2 f_2^2=|\vec \Delta|^2$ to obtain the following relation:
\begin{align}
\label{eqn: pre proof decreasing}
\begin{aligned}
    |\vec\Delta|^n-a_1 f_1^n&=|\vec\Delta|^n- (1-a_2) f_1^n \\
    &=|\vec\Delta|^n- f_1^n + a_2 f_1^n\\
    &=(|\vec\Delta|^2- f_1^2) \sum_{k=0}^{n-2}|\vec\Delta|^{n-2-k} f_1^{k} + a_2 f_1^n\\
    &=a_2(f_2^2- f_1^2) \sum_{k=0}^{n-2}|\vec\Delta|^{n-2-k} f_1^{k} + a_2 f_1^n,
\end{aligned}
\end{align}
where $k$ is an even integer within the interval from $0$ to $n-2$. The term $a_2 (f_2^2-f_1^2)=|\vec \Delta|^2-f_1^2$ in the last part of \eqref{eqn: pre proof decreasing} is inferred from the formula $(1-a_2) f_1^2+a_2 f_2^2=|\vec \Delta|^2$.

Subsequently, we use \eqref{eqn: pre proof decreasing} to establish the inequality $|C_n|(f_2^2+|\vec\Delta|^2)-|C_{n+2}|>0$ as follows:
\begin{small}
\begin{align}
\label{eqn: proof decreasing 1}
\begin{aligned}
      &|C_n|(f_2^2+|\vec\Delta|^2)-|C_{n+2}| \\
      &= (a_1 f_1^n + a_2 f_2^n -|\vec \Delta|^n) (f_2^2+|\vec\Delta|^{2}) \\
      &\phantom{=}- (a_1 f_1^{n+2} + a_2 f_2^{n+2} -|\vec \Delta|^{n+2}) \\
      & = a_2 (f_2^2+|\vec\Delta|^{2}) (f_2^n-f_1^n-(f_2^2- f_1^2) \sum_{k=0}^{n-2}|\vec\Delta|^{n-2-k} f_1^{k})\\
      &\phantom{=}-a_2 (f_2^{n+2}-f_1^{n+2}-(f_2^2- f_1^2) \sum_{k=0}^{n}|\vec\Delta|^{n-k} f_1^{k})\\
      & = a_2 (f_2^2+|\vec\Delta|^{2}) (f_2^2-f_1^2)\left(\sum_{k=0}^{n-2}f_2^{n-2-k} f_1^{k}-\sum_{k=0}^{n-2}|\vec\Delta|^{n-2-k} f_1^{k}\right)\\
      &\phantom{=}-a_2 (f_2^2-f_1^2)\left(\sum_{k=0}^{n}f_2^{n-k} f_1^{k}- \sum_{k=0}^{n}|\vec\Delta|^{n-k} f_1^{k}\right) .
\end{aligned}
\end{align}
\end{small}
Since we know that both $a_2$ and $f_2^2-f_1^2$ are positive, we can extract them from our analysis.
Consequently, we find that:
\begin{align}
\label{eqn: proof decreasing 3}
\begin{aligned}
    &\frac{1}{a_2(f_2^2-f_1^2)}(|C_n|(f_2^2+|\vec\Delta|^2)-|C_{n+2}|) \\
      & = (f_2^2+|\vec\Delta|^{2}) \sum_{k=0}^{n-2} f_1^{k} (f_2^{n-2-k} -|\vec\Delta|^{n-2-k}) \\
      &\phantom{=}-\sum_{k=0}^{n} f_1^{k} (f_2^{n-k} -|\vec\Delta|^{n-k}) \\
    &= \sum_{k=0}^{n-4} f_1^{k} (f_2^2+|\vec\Delta|^{2})(f_2^{n-2-k} -|\vec\Delta|^{n-2-k}) \\
     &\phantom{==}-\sum_{k=0}^{n-4} f_1^{k} (f_2^{n-k} -|\vec\Delta|^{n-k})-f_1^{n-2} (f_2^2-|\vec\Delta|^2)
\end{aligned}
\end{align}
where in the last equality we use the fact that $f_2^0-|\vec\Delta|^{0}=0$. 


Additionally, we use the following relation:
\begin{align}
    \begin{aligned}
        &(f_2^2+|\vec\Delta|^{2})(f_2^{n-2} -|\vec\Delta|^{n-2})-(f_2^{n} -|\vec\Delta|^{n})\\
        &= f_2^2 |\vec\Delta|^2 (f_2^2-|\vec\Delta|^2) \sum_{l=0}^{n-6} f_2^{n-6-l} |\vec\Delta|^l ,
    \end{aligned}
\end{align}
where the index $l$ is an even integer. This allows us to simplify \eqref{eqn: proof decreasing 3} as follows:
\begin{small}
\begin{align}
\label{eqn: proof decreasing 35}
\begin{aligned}
    &\frac{1}{a_2(f_2^2-f_1^2)}(|C_n|(f_2^2+|\vec\Delta|^2)-|C_{n+2}|) \\
     & =(f_2^2-|\vec\Delta|^2)\left[f_2^2 |\vec\Delta|^2\sum_{k=0}^{n-6} f_1^{k}  \sum_{l=0}^{n-6-k}f_2^{n-6-k-l} |\vec\Delta|^{l}
       - f_1^{n-2}   \right].
\end{aligned}
\end{align}
\end{small}
By substituting $n=6$ into \eqref{eqn: proof decreasing 35}, it yields
\begin{align}
    \begin{aligned}
        &\frac{1}{a_2(f_2^2-f_1^2)}(|C_6 | (f_2^2+|\vec \Delta|^2)-|C_{8}|) \\
        & =(f_2^2-|\vec\Delta|^2)\left[f_2^2 |\vec\Delta|^2 - f_1^{4}   \right].  
    \end{aligned}
\end{align}
Given the inequality in \eqref{eqn: freq inequality}, it follows that $|C_6 | (f_2^2+|\vec \Delta|^2)-|C_{8}|>0$.

When $n\geq 8$, dividing \eqref{eqn: proof decreasing 35} by $f_2^2-|\vec\Delta|^2$, we have:
\begin{small}
\begin{align}
\label{eqn: proof decreasing 4}
\begin{aligned}
&\frac{1}{a_2(f_2^2-f_1^2)(f_2^2-|\vec\Delta|^2)}(|C_n|(f_2^2+|\vec\Delta|^2)-|C_{n+2}|) \\
     &=f_2^2 |\vec\Delta|^2\sum_{k=0}^{n-8} f_1^{k}  \sum_{l=0}^{n-8-k}f_2^{n-8-k-l} |\vec\Delta|^{l} +f_2^2 |\vec\Delta|^2 f_1^{n-6}
       - f_1^{n-2} \\
    &= f_2^2 |\vec\Delta|^2\sum_{k=0}^{n-8} f_1^{k}  \sum_{l=0}^{n-8-k}f_2^{n-8-k-l} |\vec\Delta|^{l} + f_1^{n-6} (f_2^2 |\vec\Delta|^2-f_1^4)\\
    &>0.
\end{aligned}
\end{align}    
\end{small}
In \eqref{eqn: proof decreasing 4}, the term corresponding to $k=n-6$ is extracted from the summation. Again, it follows from the inequality $f_2^2 |\vec\Delta|^2-f_1^4>0$ that $|C_n | (f_2^2+|\vec \Delta|^2)-|C_{n+2}|>0$ holds for $n\geq 8$.

Therefore, for $n \geq 6$, we have proved the inequality $|C_n|(f_2^2+|\vec\Delta|^2)-|C_{n+2}|>0$.
Given this result, we can demonstrate that the Taylor series of $h(T_s)$ is decreasing by calculating the ratio of the $(n+2)$th-order term to the $n$th-order term, 
\begin{align}
\label{eqn: inequality Taylor expansion_appendix}
\begin{aligned}
      \frac{n!|C_{n+2}|}{(n+2)!|C_{n}|} T_s^2 & < \frac{n!}{(n+2)!}(f_2^2 +|\vec \Delta|^2)T_s^2 \\
      &< \frac{2}{(n+2)(n+1)} (|\vec v +\vec \Delta|+|\vec v|)^2 T_s^2 \\
    &< \frac{2}{(n+2)(n+1)} \pi^2 < 1 ,
\end{aligned}
\end{align}
which can be obtained using the same reasoning as in \eqref{eqn: inequality Taylor expansion_appendix 4}.

Given that we have established for $n\geq 4$, the ratio of the $(n+2)$th-order term to the $n$th-order term is less than $1$, we can deduce that the coefficients in the Taylor series of $h(T_s)$ is a decreasing series. Therefore, in the case, $0< T_s \leq \frac{\pi}{|\vec v +\vec \Delta|+|\vec \Delta|}$ and $\vec v \nparallel \vec \Delta$, the Taylor series of $h(T_s)$ forms an alternating decreasing series commencing from the $4$th-order term.

Also, the truncated value of $h(T_s)$ up to the sixth term can be calculated as
\begin{align}
\label{eqn: Taylor sixth term for general case}
\begin{aligned}
        \frac{C_4}{4!} T_s^4+\frac{C_6}{6!} T_s^6=\frac{T_s^4}{6}|(\Vec{v} \times \Vec{\Delta})|^2(1-\epsilon_6 T_s^2),
\end{aligned}
\end{align}
where $\epsilon_6$ can be found in the first equality of \eqref{eqn: inequality Taylor expansion_appendix 4}.
\begin{align}
\begin{aligned}
      \epsilon_6=\frac{1}{15}(&\frac{1}{2}|\vec\Delta|^2+|\vec{v}|^2+|\vec{v}+\vec\Delta|^2).
\end{aligned}
\end{align}
Given that the Taylor expansion of $h(T_s)$ is an alternating decreasing series starting from the $4$th-order term, and considering that this fourth-order term is positive, it follows that $h(T_s)$ must exceed its truncated value up to the sixth term
\begin{align}
\begin{aligned}
        h(T_s) >\frac{T_s^4}{6}|(\Vec{v} \times \Vec{\Delta})|^2(1-\epsilon_6 T_s^2).
\end{aligned}
\end{align}
Given the condition $T_s<\frac{\pi}{|\vec{v}+\vec{\Delta}|+ |\vec{v}|}$, it can be deduced that $T_s^2  (|\vec{v}|^2+|\vec{v}+\vec\Delta|^2)<T_s^2 (|\vec{v}+\vec{\Delta}|+ |\vec{v}|)^2<\pi^2$ and $T_s |\vec\Delta| <\pi/2$. As a result, we can infer that $h(T_s)>0$ for $0< T_s \leq \frac{\pi}{|\vec v +\vec \Delta|+|\vec \Delta|}$ and $\vec v \nparallel \vec \Delta$.

\bibliographystyle{IEEEtran}
\bibliography{references}
\end{document}